\documentclass[12pt]{article}
\usepackage{amsmath}
\DeclareMathAlphabet{\mathcal}{OMS}{cmsy}{m}{n}
\usepackage{amsthm}
\usepackage{amssymb}
\usepackage{txfonts}
\usepackage{mathrsfs}
\usepackage{arydshln}
\usepackage{blkarray}
\usepackage{graphicx}
\usepackage{cite}
\usepackage{bm}
\usepackage{color}
\usepackage{xcolor}
\usepackage{polynom}
\usepackage{framed}
\usepackage{tikz}
\usetikzlibrary{tikzmark,decorations.pathreplacing,calligraphy}
\tikzset{every picture/.style={remember picture}}
\usepackage{multirow}
\usepackage{hyperref}
\hypersetup{colorlinks=false,pdfborder={0 0 1},
  linkbordercolor={1 0 0},citebordercolor={0 1 0},
  urlbordercolor={0 1 1},hypertexnames=false}
\DeclareMathOperator{\rank}{rank}

\usepackage{multirow}  
\usepackage{booktabs}
\usepackage{bm}
\usepackage{enumitem}
\setlist[itemize]{itemsep=2pt, parsep=0pt, topsep=4pt, partopsep=0pt}

\AtBeginDocument{
	\setlength{\abovedisplayskip}{6pt}
	\setlength{\belowdisplayskip}{6pt}
	\setlength{\abovedisplayshortskip}{4pt}
	\setlength{\belowdisplayshortskip}{4pt}
}

\definecolor{red}{rgb}{1,0,0}
\definecolor{blue}{rgb}{0,0,1}
\definecolor{darkblue}{rgb}{0,0,0.4}

\begin{document}
\newtheorem{Def}{Definition}[section]
\newtheorem{example}{Example}[section]
\newtheorem{proposition}[Def]{Proposition}
\newtheorem{theorem}[Def]{Theorem}
\newtheorem{lemma}[Def]{Lemma}
\newtheorem{remark}{\noindent\mbox{Remark}}[section]
\newtheorem{corollary}[Def]{Corollary}
\newtheorem{problem}{Problem}

\newcommand{\F}{\mathbb{F}}
\newcommand{\cK}{\mathcal{K}}
\newcommand{\cS}{\mathcal{S}}
\newcommand{\Ker}{\rm Ker}
\newcommand{\Soc}{\rm Soc}
\newcommand{\lra}{\longrightarrow}
\newcommand{\ra}{\rightarrow}
\newcommand{\add}{{\rm add\, }}
\newcommand{\gd}{{\rm gl.dim\, }}
\newcommand{\End}{{\rm End\, }}
\newcommand{\overpr}{$\hfill\square$}
\newcommand{\rad}{{\rm rad\,}}
\newcommand{\soc}{{\rm soc\,}}
\renewcommand{\top}{{\rm top\,}}
\newcommand{\fdim}{{\rm fin.dim}\,}
\newcommand{\fidim}{{\rm fin.inj.dim}\,}
\newcommand{\gldim}{{\rm gl.dim}\,}
\newcommand{\cpx}[1]{#1^{\bullet}}
\newcommand{\D}[1]{{\mathscr D}(#1)}
\newcommand{\Dz}[1]{{\rm D}^+(#1)}
\newcommand{\Df}[1]{{\rm D}^-(#1)}
\newcommand{\Db}[1]{{\mathscr D}^b(#1)}
\newcommand{\C}[1]{{\mathscr C}(#1)}
\newcommand{\Cz}[1]{{\rm C}^+(#1)}
\newcommand{\Cf}[1]{{\rm C}^-(#1)}
\newcommand{\Cb}[1]{{\mathscr C}^b(#1)}
\newcommand{\K}[1]{{\mathscr K}(#1)}
\newcommand{\Kz}[1]{{\rm K}^+(#1)}
\newcommand{\Kf}[1]{{\rm K}^-(#1)}
\newcommand{\Kb}[1]{{\mathscr K}^b(#1)}
\newcommand{\modcat}[1]{#1\mbox{{\rm -mod}}}
\newcommand{\stmodcat}[1]{#1\mbox{{\rm -{\underline{mod}}}}}
\newcommand{\pmodcat}[1]{#1\mbox{{\rm -proj}}}
\newcommand{\imodcat}[1]{#1\mbox{{\rm -inj}}}
\newcommand{\opp}{^{\rm op}}
\newcommand{\otimesL}{\otimes^{\rm\bf L}}
\newcommand{\rHom}{{\rm\bf R}{\rm Hom}\,}
\newcommand{\pd}{{\rm pd}}
\newcommand{\Hom}{{\rm Hom \, }}
\newcommand{\Coker}{{\rm coker}\,\,}
\newcommand{\Ext}{{\rm Ext}}
\newcommand{\im}{{\rm im}}
\newcommand{\Cone}{{\rm Cone}}
\newcommand{\fini}{{finitistic}}
\newcommand{\proj}{{\rm  proj}}
\newcommand{\al}{{\alpha}}
\newcommand{\Tr}{{\rm  Tr}}
\newcommand{\tr}{{\rm  tr}}
\newcommand{\Norm}{{\rm  Norm}}
\newcommand{\Tor}{{\rm  Tor}}
\newcommand{\ad}{{\rm  add}}
\newcommand{\StHom}{{\rm \underline{Hom} \, }}
\renewcommand{\thefootnote}{\arabic{footnote}}

%\numberwithin{equation}{section}

%\tableofcontents
%\newpage

\renewcommand{\theequation}{\arabic{section}.\arabic{equation}}
\pagenumbering{arabic}{\Large \bf

\begin{center}
Puncturing-inclusion constructions of optimal Ferrers diagram rank-metric codes using semifield codes
\end{center}}

\centerline{ {\sc Dengming Xu$^1$\qquad Chengming Duan$^2$ \qquad Yong Jin$^3$ \qquad Yuhao Ning$^4$   }}
\begin{abstract}
Ferrers diagram rank-metric codes are essential components of the multilevel construction of constant dimension codes for random network coding. This paper aims to construct optimal Ferrers diagram rank-metric codes, focusing on the diagram \([3,3,3,5]\) with rank distance \(3\). We construct an optimal \([[3,3,3,5],6,3]_q\) Ferrers diagram rank-metric code for every prime power \(q\), thereby answering an open problem posed by Etzion et al. in 2016~\cite{Etzion2016}. The construction first builds a \([3\times4,4,3]_q\) MRD code and then proves that the Delsarte dual of its puncturing  on  the last  column contains a \([3\times3,3,3]_q\) MRD code, which  comes from semifield codes. We further construct optimal Ferrers diagram rank-metric codes for two infinite families of Ferrers diagrams: the first relies on the \([[3,3,3,5],6,3]_q\) construction and holds for every prime power \(q\), while the second is binary and obtained directly by the puncturing-inclusion construction.
\end{abstract}

\noindent {\bf Keywords}: Constant dimension codes;  Puncturing-inclusion constructions; Ferrers diagram rank-metric codes,  Semifield codes.
\vskip-1cm

\renewcommand{\thefootnote}{\alph{footnote}}
\setcounter{footnote}{-1} \footnote{$^1$ Corresponding author, Sino-European Institute of Aviation Engineering,
Civil Aviation University of China, 300300 Tianjin,
PR China. E-mail:
xudeng17@163.com.
}
\setcounter{footnote}{-1} \footnote{$^2 $ College of Science,
Civil Aviation University of China, 300300 Tianjin,
PR  China. E-mail: 19160306346@163.com
}
\setcounter{footnote}{-1} \footnote{$^3 $ College of Science,
	Civil Aviation University of China, 300300 Tianjin,
	PR  China. E-mail: kingmeng@126.com
}
\setcounter{footnote}{-1} \footnote{$^4 $ College of Science,
	Civil Aviation University of China, 300300 Tianjin,
	PR  China. E-mail: ningyuh00@163.com
}

\section{Introduction}\label{sec:introduction}

Constant dimension codes are collections of subspaces of a fixed dimension, equipped with the subspace distance. Their role in error correction for random network coding was established by Koetter and Kschischang~\cite{KoetterKschischang2008}. A basic connection with matrix codes is as follows: lifting a matrix \(A\) to the row space of \([I\mid A]\) doubles the rank distance into the subspace distance. This connection makes maximum rank-distance (MRD) codes, originating in the work of Delsarte~\cite{delsarte1978bilinear}, natural ingredients in constructions of large constant dimension codes.

The multilevel construction of Etzion and Silberstein~\cite{EtzionSilberstein2009} can be viewed as a generalization of the lifted MRD code construction, in which Ferrers diagram rank-metric codes (FDRMCs) play an essential role; we refer the reader to~\cite{EtzionSilberstein2009} for the details.  For a Ferrers diagram \(\mathcal{F}\) and a prescribed minimum rank distance \(d\), the Ferrers--Singleton bound gives an upper bound \(\nu_{\min}(\mathcal{F},d)\) on the dimension of a linear code. A code attaining this bound is called optimal.  Etzion and Silberstein conjectured that the bound is attainable for every diagram and finite field~\cite{EtzionSilberstein2009}.
Since the conjecture was proposed, the construction of optimal Ferrers diagram rank-metric codes has received considerable attention. The main constructions in the literature are based on systematic MRD codes \cite{Etzion2016,LiuChangFeng2019,AntrobusGluesingLuerssen2019,NeriStanojkovski2024,ZhangGe2019},  on MDS codes \cite{Etzion2016}, and combinatorial methods \cite{Etzion2016,LiuChangFeng2019},
for more results, please see \cite{AntrobusGluesingLuerssen2019} and the references therein.

Recently, Beeloo--Sauerbier Couv\'ee and Neri reduced the attainability problem to irreducible diagrams and propose the puncturing--inclusion MRD conjecture\cite{BeelooNeri2026} . For integers \(1<d<n\), this conjecture asks for an
$
[(n-1)\times n,n(n-d),d]_q\text{ MRD code}
$
whose puncturing on one column is contained in an
$
[(n-1)\times(n-1),(n-1)(n-d+1),d-1]_q\text{ MRD code}.
$
%The core idea is to first construct a larger rectangular MRD code and then ensure that its puncturing on one column is contained in a smaller MRD code of distance \(d-1\). 
This puncturing--inclusion property builds a bridge between classical MRD codes and optimal Ferrers diagram rank-metric codes: by reducing to irreducible diagrams, it reduces the Etzion--Silberstein conjecture for a special class of Ferrers diagrams to a statement about rectangular MRD codes (see Theorem~7.5 in \cite{BeelooNeri2026}).

The puncturing--inclusion MRD conjecture is false in general. Recently, Prajapati~\cite{Prajapati2026} proved that, over \(\mathbb{F}_2\), the diagram \([4,4,4,4,6]\) with minimum rank distance \(3\) has Ferrers--Singleton bound \(12\), but maximum linear dimension \(11\). 
 However, determining which Ferrers diagrams admit optimal Ferrers diagram rank-metric codes remains an interesting problem worthy of further investigation, due to their crucial role in the multilevel construction of constant dimension codes.

The paper aims to study the existence of optimal Ferrers diagram rank-metric codes via the puncturing--inclusion method, we   focus on the diagram \([3,3,3,5]\) with rank distance \(3\).  To this end
, we first construct a \([3\times4,4,3]_q\) MRD code $\mathcal C$. We translate the construction of this MRD code into the problem of finding a determinant over \(\mathbb{F}_q\) involving three variables such that the determinant is nonzero whenever these variables are not all zero. This further reduces the problem to finding a ternary polynomial over \(\mathbb{F}_q\) with no nonzero root. It should be
 noted that this discussion needs to be carried out separately in the cases of characteristic \(2\) and characteristic not equal to \(2\). In odd characteristic, one needs a nonsquare element, whereas in characteristic \(2\) every element is a square, so the method does not apply directly and requires a separate treatment.

Furthermore, we  find a three-dimensional semifield code contained in the Delsarte dual of the puncturing of \(\mathcal{C}\) on its last column. This problem is then translated into that of finding two bases (over \(\mathbb{F}_q\)) in a semifield \((\mathbb{F}_{q^3},\circ)\) satisfying the corresponding orthogonality equations.  Then, using properties of semifield operations, we further reduce these orthogonality equations to finding a basis \((1,x,y)\) of \(\mathbb F_{q^3}\)  over $\mathbb F_q$ such that
$$
x\circ y-y\circ x \in \mathbb F_q\setminus\{0\}.
$$

In this way, we  prove that the puncturing--inclusion MRD conjecture holds for \((n,d)=(4,3)\), and hence proving the following theorem
(see Theorem \ref{thm:all}).

\noindent{\bf Theorem A}
{\it For every prime power \(q\), there exists an optimal
	\([[3,3,3,5],6,3]_q\) Ferrers diagram rank-metric code.}

%It should be noted that the above proof does not cover the three small-field cases \(q=2,3,4\); for these cases, we give the corresponding optimal codes explicitly ( see Remarks \ref{rem:3-2},\ref{rem:4-1}) 

The existence of optimal
\([[3,3,3,5],6,3]_q\) Ferrers diagram rank-metric code was proposed as an open problem in \cite{Etzion2016}.  It has remained unresolved for a long time. Different from the existing constructions based on systematic MRD codes and MDS codes, or the combinatorial methods, we combine the puncturing-inclusion construction with semifield codes. This offers a new perspective for constructing optimal Ferrers diagram rank-metric codes.

As an application of Theorem A,  we obtain optimal Ferrers diagram rank-metric codes for an infinite family of diagrams by combination constructions (see Theorem~\ref{thm:general}):

\noindent{\bf Theorem B}
{\it  Let $n\ge1$ and write $n=6t+r$, where $t\ge0$ and $0\le r\le5$. Define
\begin{align*}
	\mathcal F_n=\big[&3,3,3,5,
	\underbrace{11,\cdots,11}_{6},
	\underbrace{17,\cdots,17}_{6},\cdots,
	\underbrace{5+6t,\cdots,5+6t}_{6},\underbrace{5+6(t+1),\cdots,5+6(t+1)}_{r}\big],
\end{align*}

\medskip
\noindent 
where the blocks indexed by $1,\cdots,t$ are omitted when $t=0$, and the last block is omitted when $r=0$.
Then for every prime power $q$, there exists an optimal
$[\mathcal F_n,6,n+3]_q$ Ferrers diagram rank-metric code.
}

In addition, we construct a binary Ferrers diagram rank-metric code directly by a puncturing-inclusion construction and obtain the following result  (see Theorem~\ref{thm:binary}):

\noindent {\bf Theorem C}  {\it Let $m\ge3$, $1\le t\le m-1$ and $$
\mathcal F_{m,t}=[\underbrace{m,m,\ldots,m}_{m\text{ columns}},m+t].$$

\medskip
\noindent Then there  exists an optimal $[\mathcal F_{m,t},m+t+1,m]_2$  Ferrers diagram rank-metric code.}

The paper is organized as follows. Section~\ref{sec:preliminaries} recalls the necessary definitions and basic results needed in the proof of the main result, including Ferrers diagram rank-metric codes, semifields and semifield codes, and the puncturing--inclusion construction. Sections~\ref{sec:odd} and~\ref{sec:even} prove the main results in odd and even characteristic, respectively. As applications of the main results, explicit examples for \(q=7,8,9\) are also provided. The small-field cases \(q=2,3,4\) are not covered by the general proof; for these, the optimal codes are given explicitly. Section \ref{sec:general} constructs optimal codes for two infinite families of Ferrers diagrams. The final section gives the conclusions of the paper.

\section{Preliminaries}\label{sec:preliminaries}

Throughout this paper, we fix a prime power $q$, and set $F=\mathbb F_q,  F^*=F\setminus\{0\}$  and $K=\mathbb F_{q^3}$.
Let $\sigma:K\to K,\ x\mapsto x^q$ be the Frobenius map. We denote the trace map by
$
\operatorname{Tr}_{K/F}:K\to F,
$
which is given by
\[
\operatorname{Tr}_{K/F}(x)=x+\sigma(x)+\sigma^2(x)
\]
for all $x\in K$. Its kernel is denoted by
\[
W=\ker(\operatorname{Tr}_{K/F})
=\{x\in K\mid \operatorname{Tr}_{K/F}(x)=0\}.
\]
The norm map is denoted by
$
N_{K/F}:K^*\to F^*,
$
and is defined by
\[
N_{K/F}(x)=x\cdot\sigma(x)\cdot\sigma^2(x)=x^{1+q+q^2}
\]
for all $x\in K^*=K\setminus\{0\}$.
\subsection{Ferrers diagram rank-metric codes}

Let \(q\) be a prime power. For two matrices \(A,B\in  \mathcal M_{m\times n}(F)\), the rank distance between them is defined by
\[
d_R(A,B)=\operatorname{rank}(A-B).
\]
An \((m\times n,N,\delta)_q\) rank-metric code is a subset
$
\mathcal C\subseteq \mathcal M_{m\times n}(F)
$
of size \(N\) with minimum rank distance
\[
\delta=\min_{A\neq B\in \mathcal C} d_R(A,B).
\]
By the Singleton-like bound,
$
N\le q^{\max\{m,n\}(\min\{m,n\}-\delta+1)}.
$
If equality holds, then \(\mathcal C\) is called a {\bf maximum rank-distance} (MRD) code. If \(\mathcal C\) is a \(k\)-dimensional \(\mathbb F_q\)-linear subspace, it is denoted by \([m\times n,k,\delta]_q\).

Endow $\mathcal{M}_{m\times n}(F)$ with the Frobenius product: 
$$\langle A,B\rangle=\operatorname{tr}(AB^{  T})
=\sum_{i=1}^m\sum_{j=1}^nA_{ij}B_{ij},
$$
Let $\mathcal C$ be a subspace of  $\mathcal{M}_{m\times n}(F)$. Denote
$$
\mathcal{C}^\perp=\{B\in \mathcal{M}_{m\times n}(F)\mid \langle A,B\rangle=0\text{ for all }A\in\mathcal{C}\}, 
$$ which is referred to as the {\bf Delsarte dual} of $\mathcal{C}$.

\begin{proposition}\cite{delsarte1978bilinear}\label{thm:delsarte-duality}
	Let $\mathcal{C}\subseteq \mathcal{M}_{m\times n}(F)$ be a linear MRD code with minimum rank distance $2\le d\le\min\{m,n\}$. Then $\mathcal{C}^\perp$ is again an MRD code. More precisely, if $m\le n$ and
	$
	\mathcal{C}$ is an $[m\times n,n(m-d+1),d]_q
	$ MRD code,
	then
	$
	\mathcal{C}^\perp$ is an $[m\times n,n(d-1),m-d+2]_q
	$ MRD code.
\end{proposition}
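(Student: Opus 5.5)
The plan is to prove Delsarte duality for linear codes in $\mathcal M_{m\times n}(F)$ with $m\le n$ in two steps: first the dimension count, then the minimum distance.

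\textbf{Dimension.} The Frobenius product is a nondegenerate symmetric bilinear form on the $mn$-dimensional space $\mathcal M_{m\times n}(F)$. Hence
\[
\dim\mathcal C^\perp=mn-n(m-d+1)=n(d-1).
\]
By the Singleton-like bound, an $[m\times n,n(d-1),\delta']_q$ code satisfies $n(d-1)\le n(m-\delta'+1)$, i.e. $\delta'\le m-d+2$. So it remains to show that every nonzero $B\in\mathcal C^\perp$ has $\rank B\ge m-d+2$.

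\textbf{Distance, by contradiction.} Suppose $0\ne B\in\mathcal C^\perp$ has rank $r\le m-d+1$.

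\emph{Normalization.} Multiplying on the left and right by invertible matrices preserves rank and the MRD property, and it transports $\mathcal C^\perp$ correctly, since $\langle PAQ,B\rangle=\langle A,P^TBQ^T\rangle$. So we may assume $B$ is supported in the first $r$ rows; in fact $B=\begin{pmatrix}I_r&0\\0&0\end{pmatrix}$.

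\emph{Key lemma.} For a linear MRD code with $m\le n$ and minimum distance $d$, the map sending a codeword to its top $m-d+1$ rows is injective. Indeed, a codeword vanishing on those rows has rank at most $d-1$, so it is zero. Comparing dimensions, $n(m-d+1)=\dim\mathcal M_{(m-d+1)\times n}(F)$, so the map is a bijection. Thus for any $(m-d+1)\times n$ matrix $M$ there is a codeword whose top $m-d+1$ rows equal $M$.

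\emph{Contradiction.} Since $r\le m-d+1$, choose $M$ with $M_{11}=1$ and all other entries zero. The corresponding codeword $A$ satisfies $\langle A,B\rangle=A_{11}=1\ne0$. This contradicts $B\in\mathcal C^\perp$. Hence the minimum distance of $\mathcal C^\perp$ equals $m-d+2$, and $\mathcal C^\perp$ is MRD.

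The main obstacle is the injectivity/surjectivity lemma together with the reduction to $B$ supported in a block of $m-d+1$ rows. The delicate point is checking that the row and column operations transport $\mathcal C^\perp$ correctly, which is exactly the transpose identity above.
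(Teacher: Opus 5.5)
Your proof is correct for $m\le n$. The paper gives no proof of this proposition; it quotes it from \cite{delsarte1978bilinear}. The relevant comparison is therefore with Delsarte's original argument, which works in the bilinear-forms association scheme. There the rank distribution of $\mathcal C^\perp$ is obtained from that of $\mathcal C$ by a MacWilliams-type transform, MRD codes are shown to have a rank distribution determined by their parameters, and duality follows.

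Your route is elementary. The dimension count plus the Singleton bound gives $d(\mathcal C^\perp)\le m-d+2$. The reverse inequality comes from your information-set lemma: the top $m-d+1$ rows determine a codeword, and by the dimension count every $(m-d+1)\times n$ pattern occurs. You combine this with rank normal form and the transport identity $\langle PAQ,B\rangle=\langle A,P^TBQ^T\rangle$. Delsarte's approach yields more, namely the full rank distribution of MRD codes and their duals. Yours is shorter, self-contained, and sufficient for the use made of the proposition in Proposition~\ref{prop:mrd-ferrers-lift}.

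Two minor points to tighten. First, in the final step $\langle A,B\rangle=\sum_{i=1}^{r}A_{ii}$, not $A_{11}$ a priori. It reduces to $A_{11}=1$ because $r\le m-d+1$ places $A_{22},\dots,A_{rr}$ inside the prescribed top block, where $M$ is zero. You invoke this hypothesis but compress the step. Second, the first sentence of the statement does not assume $m\le n$. The case $m>n$ follows by transposition: rank is transpose-invariant and $\langle A^T,B^T\rangle=\langle A,B\rangle$, so $(\mathcal C^T)^\perp=(\mathcal C^\perp)^T$.
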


	A \textbf{Ferrers diagram} \(\mathcal{F}\) of size \(m \times n\) is an \(m \times n\) array of dots and blanks such that the number of dots in each row is nonincreasing from top to bottom, all dots are right-justified, the first row has \(n\) dots, and the rightmost column has \(m\) dots.
 Also, Ferrers diagram can be described by the sequence of the numbers of dots in its columns. For example, the Ferrers diagram 
$$\mathcal F=\quad \begin{array}{cccc}
	\bullet&\bullet&\bullet&\bullet\\
	\bullet&\bullet&\bullet&\bullet\\
	\bullet&\bullet&\bullet&\bullet\\
	&&&\bullet\\&&&\bullet
\end{array}$$
can be represented by $\mathcal F=[3,3,3,5].$

Note that the column heights in this paper are non-decreasing from left to right, whereas the convention in \cite{BeelooNeri2026} uses a non-increasing column-height order. When applying Theorem~7.5 of \cite{BeelooNeri2026}, we first transpose the diagram, i.e., interchange rows and columns, and then reverse the order of the columns so that it conforms to the convention adopted in this paper.

An \([m\times n,k,\delta]_q\) rank-metric code \(\mathcal C\) is called an \([\mathcal F,k,\delta]_q\) {\bf Ferrers diagram rank-metric code} (FDRMC) if every matrix in \(\mathcal C\) has zero entries outside \(\mathcal F\). The Ferrers diagram rank-metric code of a full Ferrers diagram is exactly a rank-metric code. For given \(\mathcal F\) and \(\delta\), let \(\dim(\mathcal F,\delta)\) denote the largest possible dimension of an \([\mathcal F,k,\delta]_q\) Ferrers diagram rank-metric code. The following Singleton-type bound holds: for \(0\le i\le \delta-1\), let \(\nu_i\) be the number of dots in \(\mathcal F\) after deleting the first \(i\) rows and the rightmost \(\delta-1-i\) columns. Then
\[
\dim(\mathcal F,\delta)\le \min_{0\le i\le \delta-1}\nu_i=:\nu_{\min}(\mathcal F,\delta).
\]
A Ferrers diagram rank-metric code attaining this bound is said to be  optimal. 
It was conjectured by Etzion and Silberstein  that this Singleton-like bound is always tight:

\noindent {\bf Etzion-Silberstein Conjecture}\cite{EtzionSilberstein2009}: {\it For any given Ferrers diagram \(\mathcal F\) and positive integer \(\delta\), there always exists an optimal \([\mathcal F,\nu_{\min}(\mathcal F,\delta),\delta]_q\) Ferrers diagram rank-metric code. }

This conjecture has been verified in several cases, see  \cite{Etzion2016,ZhangGe2019,AntrobusGluesingLuerssen2019,NeriStanojkovski2024,LiuChangFeng2019}. Recently, this conjecture has  been disproved by a counterexample in \cite{Prajapati2026}. However, determining which Ferrers diagrams admit optimal Ferrers diagram rank-metric codes remains an interesting problem worthy of further investigation, due to their crucial role in the multilevel construction of constant dimension codes, please see 
\cite{XuLi2026} and the references therein.

For the Ferrers diagram \(\mathcal{F}=[3,3,3,5]\) with  rank distance \(3\), we have \(\nu_{\min}(\mathcal F,3)=6\).  The paper aims to prove the existence of an optimal \([[3,3,3,5],6,3]_q\) Ferrers diagram rank-metric code. This was posed as an open problem in \cite{Etzion2016}, where a binary example was also given. Here we provide a general  construction for every prime power $q$.

\subsection{Semifield Codes}

This subsection introduces a class of MRD codes induced by semifields. 
Let $F=\mathbb F_q$, and let $\mathbb K/F$ be a finite field
extension. An {$F$-semifield structure} on $\mathbb K$
is an $F$-bilinear multiplication
$
\circ:\mathbb K\times\mathbb K\longrightarrow\mathbb K
$
such that: (1) for all $x,y\in\mathbb K$, $x\circ y=0$ implies 
$x=0$ or $y=0$; and (2) there exists $e\in\mathbb K\setminus\{0\}$ 
satisfying $e\circ x=x\circ e=x$ for every $x\in\mathbb K$.
Equipped with its usual addition and the multiplication $\circ$,
the structure $(\mathbb K,+,\circ)$ is called an $F$-{\bf semifield}.

We introduce the  construction of  semifields by  the classical Kaplansky's trick. This technique has been systematically discussed in Albert's twisted field construction and its generalizations; see \cite{albert1961a, pumplun2015}. The construction  in this subsection can be viewed as a special case of Kaplansky's trick. For the reader's convenience, we provide detailed proofs.

\begin{proposition}\emph{\cite{albert1961a}}\label{bilinear1}
	Let \(c \in F^*\) satisfy \(c^3 \neq -1\). Define an operation \(\star\) on \(K\):
	\[
	a \star b = ab + c a^q b^{q^2}, \quad \forall a,b \in K.
	\]
	Then \(\star\) is an \(F\)-bilinear map without zero divisors.
\end{proposition}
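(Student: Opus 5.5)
Bilinearity needs no real work. The maps $a\mapsto a^q$ and $b\mapsto b^{q^2}$ are $F$-linear because $\lambda^q=\lambda$ for $\lambda\in F$, and $c\in F$. Hence $a\star b=ab+c\,a^qb^{q^2}$ is additive and $F$-homogeneous in each variable separately.

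The substantive claim is the absence of zero divisors. I would argue by contradiction. Suppose $a\star b=0$ with $a\neq 0$ and $b\neq 0$. Then $ab=-c\,a^qb^{q^2}$, and dividing by $ab\neq 0$ gives
\[
a^{q-1}b^{q^2-1}=-c^{-1}.
\]
Next I would take the norm $N_{K/F}$ of both sides. Both $a^{q-1}=a^q/a$ and $b^{q^2-1}=b^{q^2}/b$ are quotients of Galois conjugates. Since $N_{K/F}$ is multiplicative and invariant under $\sigma$, we get $N_{K/F}(a^{q-1})=N_{K/F}(a)^{q-1}=1$, because $N_{K/F}(a)\in F^*$. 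Likewise $N_{K/F}(b^{q^2-1})=1$. So the left-hand side has norm $1$.

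On the right-hand side, $-c^{-1}\in F^*$, so its norm is $(-c^{-1})^3=-c^{-3}$. Comparing gives $-c^{-3}=1$, that is, $c^3=-1$, which contradicts the hypothesis $c^3\neq -1$. Therefore $a\star b=0$ forces $a=0$ or $b=0$.

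The only step needing care is the norm computation on the left. It relies on the fact that the norm of $x^{\sigma^i}/x$ is $1$, which follows directly from the definition $N_{K/F}(x)=x\sigma(x)\sigma^2(x)$. The rest is routine, so I expect no real obstacle. No identity element is claimed here; presumably the paper's subsequent Kaplansky-trick step will produce a unital semifield from $\star$.
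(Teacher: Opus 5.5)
Your proposal is correct and follows essentially the same route as the paper. You get bilinearity from the $F$-linearity of Frobenius, then divide $a\star b=0$ to obtain $a^{q-1}b^{q^2-1}=-c^{-1}$. Taking norms gives $1=-c^{-3}$, which contradicts $c^3\neq -1$.
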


\begin{proof}
Since the Frobenius maps \(x\mapsto x^q\) and \(x\mapsto x^{q^2}\)
are \(F\)-linear, the operation \(\star\) is \(F\)-bilinear.
Suppose that \(a,b\in K^*\) and \(a\star b=0\). Dividing
\(ab+ca^qb^{q^2}=0\) by \(cab\), we obtain
\[
a^{q-1}b^{q^2-1}=-c^{-1}.
\]
Since \(N_{K/F}(a),N_{K/F}(b)\in F^*\), taking norms gives
\[
1=N_{K/F}(a)^{q-1}N_{K/F}(b)^{q^2-1}
 =N_{K/F}(-c^{-1})=(-c^{-1})^3=-c^{-3}.
\]
Thus \(c^3=-1\), contrary to the hypothesis. Therefore,
\(\star\) has no zero divisors.
\end{proof}

\begin{proposition}\emph{\cite{pumplun2015}}\label{bilinear2}
	Let \(c \in F^*\) satisfy \(c^3 \neq -1\), and let the operation $\star$ be as defined in Proposition \emph{\ref{bilinear1}}. Define the  right and left multiplication maps \(R, L: K \to K\):
	\[
	R(a) = a \star 1, \quad L(b) = 1 \star b.
	\]
	Then \(R, L\) are both invertible linear maps on \(K\). Furthermore, let $e=1+c$,
	and define an operation \(\circ\):
	\[
	a \circ b = e^{-1} \left( R^{-1}(ea) \star L^{-1}(eb) \right), \quad \forall a,b \in K.
	\]
	Then \((K, \circ)\) is a semifield.
\end{proposition}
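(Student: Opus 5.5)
The plan is to reduce everything to Proposition~\ref{bilinear1}, namely the $F$-bilinearity of $\star$ and the fact that it has no zero divisors, together with the observation that $e=1+c$ is a nonzero element of $F$. First, $c\in F^*$ forces $e\in F$. Moreover $e\neq 0$, because $c=-1$ would give $c^3=-1$, which is excluded. So multiplication by $e$ and by $e^{-1}$ is an $F$-linear bijection of $K$.

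\textbf{Step 1: $R$ and $L$ are invertible linear maps.} Explicitly,
\[
R(a)=a\star 1=a+ca^q, \qquad L(b)=1\star b=b+cb^{q^2}.
\]
Both are $F$-linear by the bilinearity of $\star$. Suppose $R(a)=0$ with $a\neq 0$. Then $a\star 1=0$ with both factors nonzero, which contradicts Proposition~\ref{bilinear1}. Hence $\ker R=0$, and since $K$ is finite-dimensional over $F$, $R$ is bijective. The same argument applies to $L$.

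\textbf{Step 2: $\circ$ is $F$-bilinear without zero divisors.} The map $(a,b)\mapsto e^{-1}\bigl(R^{-1}(ea)\star L^{-1}(eb)\bigr)$ is obtained from the bilinear map $\star$ by precomposing each argument with the $F$-linear maps $a\mapsto R^{-1}(ea)$ and $b\mapsto L^{-1}(eb)$, and postcomposing with scalar multiplication by $e^{-1}\in F$. It is therefore $F$-bilinear. Now suppose $a\circ b=0$. Then $R^{-1}(ea)\star L^{-1}(eb)=0$, so by Proposition~\ref{bilinear1} either $R^{-1}(ea)=0$ or $L^{-1}(eb)=0$. Injectivity of $R^{-1}$ and $L^{-1}$, together with $e\neq 0$, then gives $a=0$ or $b=0$.

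\textbf{Step 3: $1$ is a two-sided identity.} This is the one point requiring care; the key computations are $R(1)=1\star 1=1+c=e$ and likewise $L(1)=e$, so $R^{-1}(e)=1=L^{-1}(e)$. For the left identity,
\[
1\circ b=e^{-1}\bigl(R^{-1}(e)\star L^{-1}(eb)\bigr)=e^{-1}\bigl(1\star L^{-1}(eb)\bigr)=e^{-1}L\bigl(L^{-1}(eb)\bigr)=b.
\]
For the right identity,
\[
a\circ 1=e^{-1}\bigl(R^{-1}(ea)\star 1\bigr)=e^{-1}R\bigl(R^{-1}(ea)\bigr)=a.
\]
Hence condition (2) of the definition holds with identity $1\neq 0$. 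Combined with Step 2, this shows that $(K,\circ)$ is an $F$-semifield.

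No step is a serious obstacle. The only things to verify carefully are that $e\in F^*$, which is needed both for the scalar to commute with $F$-linearity and for invertibility, and that $R(1)=L(1)=e$, which is exactly what makes $1$ the identity.
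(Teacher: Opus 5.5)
Your proposal is correct and follows essentially the same route as the paper: $R$ and $L$ are invertible because $\star$ has no zero divisors, $e=1+c\neq 0$ because $c^3\neq -1$, bilinearity and the absence of zero divisors for $\circ$ follow by composing with linear bijections, and $1$ is the identity because $R(1)=L(1)=e$. You are slightly more explicit than the paper, which verifies only $a\circ 1=a$ in detail and treats $1\circ a=a$ as similar.
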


\begin{proof}
	Since \(\star\) has no zero divisors, \(R\) and \(L\) are injective \(F\)-linear maps on the finite-dimensional space \(K\), hence invertible. Moreover, \(c^3\ne-1\) implies \(e=1+c\ne0\). These facts ensure that \(\circ\) is well defined and has no zero divisors.
	Since $R^{-1}$ and $L^{-1}$ are both linear maps, and $\star$ is a bilinear map, one can see that $\circ $ is a bilinear map.
	
	Since 
	$
	L(1)=1\star 1=e,
	$
	we have $L^{-1}(e)=1$. Hence
	$$
	a\circ 1
	=e^{-1}\left(R^{-1}(ea)\star L^{-1}(e)\right)
	=e^{-1}\left(R^{-1}(ea)\star 1\right)=e^{-1}R\left(R^{-1}(ea)\right)=e^{-1}ea=a.
	$$
	Similarly, one can prove that $1\circ a=a.$
	Therefore, $1$ is the identity element of $(K,\circ)$.
	In summary, $(K,\circ)$ is a semifield.
\end{proof}

\begin{remark}
When $c=0$, both $\star$ and $\circ$ reduce to ordinary field multiplication on $K$, so the commutator vanishes and this case does not yield the nonzero commutator required in the construction below.
\end{remark}

In general, for arbitrary \(u,v\), computing \(u\circ v\) can be complicated. For our later purposes, however, we need the commutator \(u\circ v-v\circ u\), and this simplifies when \(u\) and \(v\) are trace-zero.

\begin{proposition}\label{prop:trace-zero-master} 
	Keep the notation of Proposition~\ref{bilinear2}, and set $D=c^2-c+1$. Let $u,v\in W$.
	Then we have the following.
	
	\begin{itemize}

		\item[$(1)$] $R^{-1}(eu)
		=
		\frac{e}{D}\bigl((1-c)u-cu^q\bigr),
		\
		L^{-1}(ev)
		=
		\frac{e}{D}\bigl(v+cv^q\bigr).$
		\item[$(2)$]  $u\circ v-v\circ u
		=
		\frac{3c(1-c^2)}{(c^2-c+1)^2}
		\bigl(uv^q-u^qv\bigr).$
	\end{itemize}
\end{proposition}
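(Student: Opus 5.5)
The plan is to compute $R$ and $L$ explicitly on the trace-zero subspace $W$, invert them there, and then read off the commutator.

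\textbf{Part (1).} For $a\in K$ we have $R(a)=a\star 1=a+ca^q$ and $L(b)=1\star b=b+cb^{q^2}$. Two observations drive the argument.
\begin{itemize}
\item $W$ is $\sigma$-stable, so $R$ and $L$ map $W$ into $W$. Since they are injective (Proposition~\ref{bilinear2}), they restrict to bijections of $W$.
\item For $w\in W$ we have $w^{q^2}=-w-w^q$. This turns $\sigma^2$ into a linear expression in $1$ and $\sigma$ on $W$. Hence on $W$, $L(b)=(1-c)b-cb^q$ while $R(a)=a+ca^q$.
\end{itemize}
Since $e\in F$, we have $eu\in W$. We therefore look for $x=\alpha u+\beta u^q\in W$ with $R(x)=eu$. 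Expand $R(x)=\alpha u+\beta u^q+c(\alpha u^q+\beta u^{q^2})$ and substitute $u^{q^2}=-u-u^q$. Comparing coefficients of $u$ and $u^q$ gives
\[
\alpha - c\beta = e,\qquad \beta + c\alpha - c\beta = 0.
\]
This linear system has determinant $1-c+c^2=D$. Solving it should give $\alpha=e(1-c)/D$ and $\beta=-ec/D$, which is the claimed formula. To make the coefficient comparison legitimate, I will check $u$ and $u^q$ are linearly independent when $u\neq0$. If instead $u^q=\lambda u$ with $\lambda\in K$, then $\operatorname{Tr}(u)=0$ forces a contradiction. The case $u=0$ is trivial, and in any case the final identity can simply be verified by substitution.

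The computation for $L^{-1}(ev)$ is the same. Set $y=\gamma v+\delta v^q$, compute $L(y)=y+cy^{q^2}$, and reduce using $v^{q^2}=-v-v^q$ and $v^{q^3}=v$. This should yield $\gamma=e/D$ and $\delta=ec/D$.

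\textbf{Part (2).} Write $x=R^{-1}(eu)$ and $y=L^{-1}(ev)$, so that $D^2e^{-1}(u\circ v)=e^{2}\cdot e^{-1}\cdot(\dots)$. Concretely,
\[
u\circ v=\frac{e}{D^2}\bigl((1-c)u-cu^q\bigr)\star\bigl(v+cv^q\bigr),
\]
and $v\circ u$ is the same expression with $u$ and $v$ swapped. Expand $a\star b=ab+ca^qb^{q^2}$. In the $a^qb^{q^2}$ terms, all exponents then involve $u,u^q,u^{q^2}$ and $v,v^q,v^{q^2}$. Reduce every $w^{q^2}$ to $-w-w^q$, so each product becomes a combination of the four monomials $uv,\ uv^q,\ u^qv,\ u^qv^q$ with coefficients in $F[c]$.

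In the commutator, the symmetric monomials $uv$ and $u^qv^q$ must cancel. Only the antisymmetric part $uv^q-u^qv$ survives, with coefficient $\frac{e}{D^2}\kappa(c)$ for some polynomial $\kappa$. We then need to show $e\,\kappa(c)=3c(1-c^2)$, i.e.\ $\kappa(c)=3c(1-c)$ since $e=1+c$.

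\textbf{Main obstacle.} The main risk is the bookkeeping in this last expansion, especially the Frobenius powers $u^{q^3}=u$ that appear after applying $\sigma$ to $u^{q^2}$. To control signs, I will track coefficients as vectors in the basis $\{uv,\ uv^q,\ u^qv,\ u^qv^q\}$. As sanity checks, the result should vanish at $c=0$ (by the Remark) and at $c=1$ (where the formula gives $0$).
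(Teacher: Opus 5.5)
Your plan is essentially the paper's proof. The paper obtains (1) from the operator identity $(I+c\sigma)\bigl((1-c)I-c\sigma\bigr)=DI$ on $W$, which packages your $2\times 2$ system and gives both inverses at once. It obtains (2) by exactly your four-monomial expansion, with $\kappa(c)=(c-2c^2)-(c^2-2c)=3c(1-c)$.

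One correction: your claim that $u$ and $u^q$ are $F$-independent for $0\neq u\in W$ is false. If $u^q=\lambda u$ with $\lambda\in F$, the trace condition only forces $\lambda^2+\lambda+1=0$, which is solvable when $q\equiv 1 \pmod 3$; in the paper's $q=7$ example, $\beta^q=4\beta$ with $\operatorname{Tr}(\beta)=0$. Also $1\in W$ in characteristic $3$. So the coefficient comparison can only be a heuristic for finding $\alpha,\beta,\gamma,\delta$. You must rely on your fallback: verify $R(x)=eu$ and $L(y)=ev$ directly using $w^{q^2}=-w-w^q$, then invoke injectivity of $R$ and $L$.
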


\begin{proof} Set $D=c^2-c+1$. Since $c^3\neq -1$ and  $c^3+1=(c+1)(c^2-c+1)$, we  have $D\neq 0.$

	 (1)
	For $w\in W$,
	$
	w+w^q+w^{q^2}=0.
	$
	Writing $\sigma(w)=w^q$, we therefore have on $W$
	\begin{equation}\label{eq:master-frob}
		\sigma^2+\sigma+I=0.
	\end{equation}
Note that 
	$
	R=I+c\sigma,
	$ and $
	L=I+c\sigma^2=(1-c)I-c\sigma
	$ holds on  $W$. 
	It follows from  \eqref{eq:master-frob} that 
	\[
	RL= (I+c\sigma)\bigl((1-c)I-c\sigma\bigr)=DI
	\]
	holds on $W$.
	Hence, we have on $W$ that 
	$
	R^{-1}=\frac1D\bigl((1-c)I-c\sigma\bigr)$ 
	and
	$
	L^{-1}=\frac1D(I+c\sigma).
	$
	This  proves (1).
	
	(2) Set $X=(1-c)u-cu^q,Y=v+cv^q.$
	By bilinearity, we get
	$
	u\circ v=\frac{e}{D^2}(X\star Y).
	$
	Since
	$$
	u^{q^2}=-u-u^q,
	\quad
	v^{q^2}=-v-v^q,
	$$
	we obtain
	$
	X^q=cu+u^q,
	\
	Y^{q^2}=(c-1)v-v^q.
	$
	Expanding the two terms separately gives
	\begin{align*}
		XY
		={}&(1-c)uv+c(1-c)uv^q-cu^qv-c^2u^qv^q,\\
		cX^qY^{q^2}
		={}&c^2(c-1)uv-c^2uv^q
		+c(c-1)u^qv-cu^qv^q.
	\end{align*}
	Therefore,
	\[ \begin{aligned}
		u\circ v
		&=\frac{e}{D^2}
		(X\star Y)=\frac{e}{D^2}(XY+cX^qY^{q^2})
		\\
		&=	\frac{1+c}{(c^2-c+1)^2}
		\Bigl[
		(1+c)(1-c)^2uv
		+(c-2c^2)uv^q
		+(c^2-2c)u^qv
		-c(1+c)u^qv^q
		\Bigr].
	\end{aligned}\]
	Now interchange $u$ and $v$. Since the original field multiplication on $K$ is commutative, the coefficients of $uv$ and $u^qv^q$ are unchanged, while the coefficients of $uv^q$ and $u^qv$ are exchanged. Hence
	\begin{align*}
		u\circ v-v\circ u
		&=
		\frac{1+c}{D^2}
		\bigl[(c-2c^2)-(c^2-2c)\bigr]
		(uv^q-u^qv)\\
		&=
		\frac{1+c}{D^2}\,3c(1-c)(uv^q-u^qv)=
		\frac{3c(1-c^2)}{D^2}(uv^q-u^qv).
	\end{align*}
	This proves (2).
\end{proof}

The following result is an easy consequence of elementary finite field theory.
\begin{lemma}\label{lem:trace-zero-master}  Let $u,v\in K^*$ and set $\Omega(u,v)=uv^q-u^qv$. Then we have the following.
	\begin{itemize}
		\item[(1)] If $\Omega(u,v)=uv^q-u^qv=0$, then $(u,v)$ is linearly dependent  over $F$.
		\item[(2)] If $u,v\in W$, then 
		$\Omega(u,v)\in F.$ Moreover, if  $(u,v)$ is linearly independent  over $F$, then 	$\Omega(u,v)\in F^*.$ 
	\end{itemize}
	
\end{lemma}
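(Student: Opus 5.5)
For part (1), I would exploit the fixed field of the Frobenius. Since $u,v\in K^*$, the equation $uv^q=u^qv$ can be divided by $u^{q}v^{q}\cdot$ appropriately. Concretely, set $w=v/u\in K^*$. Then
\[
\Omega(u,v)=uv^q-u^qv=u^{q+1}\bigl(w^q-w\bigr),
\]
so $\Omega(u,v)=0$ with $u\neq0$ forces $w^q=w$. The elements of $K=\mathbb F_{q^3}$ fixed by $\sigma$ are exactly the roots of $x^q-x$, namely $F$. Hence $w\in F$, i.e.\ $v=wu$ with $w\in F^*$, and $(u,v)$ is linearly dependent over $F$.

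For part (2), the plan is to show $\sigma(\Omega)=\Omega$ using the relation $\sigma^2+\sigma+I=0$ on $W$ from \eqref{eq:master-frob}. For $u,v\in W$ we compute
\[
\Omega^q=u^qv^{q^2}-u^{q^2}v^q .
\]
Substituting $u^{q^2}=-u-u^q$ and $v^{q^2}=-v-v^q$ gives
\[
\Omega^q=u^q(-v-v^q)-(-u-u^q)v^q=-u^qv-u^qv^q+uv^q+u^qv^q=uv^q-u^qv=\Omega .
\]
Thus $\Omega$ is fixed by Frobenius, so $\Omega\in F$.

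For the "moreover" claim, suppose $(u,v)$ is linearly independent over $F$. Then both $u$ and $v$ are nonzero, and part (1) in contrapositive form gives $\Omega(u,v)\neq0$. Therefore $\Omega(u,v)\in F^*$.

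There is no real obstacle here. The only point needing care is the substitution step in (2), which relies on both $u$ and $v$ lying in $W$. The computation above checks that the cross terms cancel exactly.
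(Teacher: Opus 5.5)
Your proposal is correct and follows essentially the same route as the paper. In (1) you show that the ratio $v/u$ is fixed by Frobenius; the paper instead divides by $vv^q$ to get $u/v=(u/v)^q$, which is the same argument. In (2) you verify $\Omega^q=\Omega$ using $u^{q^2}=-u-u^q$ and $v^{q^2}=-v-v^q$, then invoke (1) to get $\Omega\neq 0$, exactly as the paper does.
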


\begin{proof} We give a detailed proof for the reader's convenience.
	
	(1) Suppose  $\Omega(u,v)=0$. Then 
	$
	uv^q=u^qv.
	$
	Since  $u,v$ are nonzero, divide by $vv^q$ to obtain
	\[
	\frac{u}{v}=\left(\frac{u}{v}\right)^q.
	\]
	Hence $u/v\in F$, which implies that  $(u,v)$ is linearly dependent over $F$.

	(2)	For $u,v\in W$,  we have
	$
	u^{q^2}=-u-u^q,
	v^{q^2}=-v-v^q.
	$
	Hence, step by step,
	\begin{align*}
		\Omega(u,v)^q
		&	=(uv^q-u^qv)^q
		=u^qv^{q^2}-u^{q^2}v^q\\
		&=u^q(-v-v^q)-(-u-u^q)v^q
		=-u^qv-u^qv^q+uv^q+u^qv^q\\
		&=uv^q-u^qv
		=\Omega(u,v).
	\end{align*}
	Thus $\Omega(u,v)$ is fixed by the $q$-Frobenius and therefore
	$
	\Omega(u,v)\in F.
	$ Moreover, if  $(u,v)$ is linearly independent  over $F$, then  it follows from (1) that	$\Omega(u,v)\in F^*.$ 
	\end{proof}

At the end of this subsection,  we   give the definition of a semifield code in general.
\begin{lemma}\emph{\cite{deLaCruz2016}}\label{lem:trace-spread}
	Let $(\mathbb K,\circ)$ be an $ F$-semifield, and set $n=[\mathbb K: F]$. Take two ordered $ F$-bases of $\mathbb K:$
	\[
	\mathbf p=(p_1,\cdots,p_n),\qquad \mathbf q=(q_1,\cdots,q_n),
	\]
	For $z\in \mathbb K$, define
	$$
	B_z=\bigl(\mathrm{Tr}_{\mathbb K/ F}(z(p_i\circ q_j))\bigr)_{1\le i,j\le n}.$$
	Then $
	\mathcal S=\{B_z\mid z\in \mathbb K\}$
	is an $[n\times n,n,n]_q$ MRD code, called a {\bf semifield code}.
\end{lemma}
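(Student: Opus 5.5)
The plan is to show three things: that $\mathcal S$ is an $F$-linear space of matrices, that the map $z\mapsto B_z$ is injective (so $\dim_F\mathcal S=n$), and that every nonzero $B_z$ has full rank $n$. Together these give an $[n\times n,n,n]_q$ code, and since $q^{n(n-n+1)}=q^n$ the Singleton-like bound is attained, so $\mathcal S$ is MRD.

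\emph{Linearity and dimension.} The map $z\mapsto B_z$ is $F$-linear because $\mathrm{Tr}_{\mathbb K/F}$ is $F$-linear and multiplication by $z$ is $F$-linear. So $\mathcal S$ is a subspace, and its dimension is $n$ as soon as the kernel is trivial. Injectivity will follow from the rank statement, since a nonzero $z$ will give $\operatorname{rank}B_z=n>0$.

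\emph{Full rank; this is the main step.} Fix $z\ne0$ and suppose some $\lambda=(\lambda_1,\dots,\lambda_n)^T\in F^n$ satisfies $B_z\lambda=0$. Put $y=\sum_j\lambda_jq_j\in\mathbb K$. By $F$-bilinearity of $\circ$ and $F$-linearity of the trace, the $i$-th entry of $B_z\lambda$ is
\[
\sum_j\lambda_j\,\mathrm{Tr}_{\mathbb K/F}\bigl(z(p_i\circ q_j)\bigr)=\mathrm{Tr}_{\mathbb K/F}\bigl(z(p_i\circ y)\bigr).
\]
Hence $\mathrm{Tr}_{\mathbb K/F}(z(x\circ y))=0$ for all $x$ in the span of the $p_i$, which is all of $\mathbb K$. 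Now assume $y\neq0$. The right multiplication $R_y:x\mapsto x\circ y$ is $F$-linear and injective, because the semifield has no zero divisors; as $\mathbb K$ is finite-dimensional, $R_y$ is bijective. Therefore $\{x\circ y\}$ runs through all of $\mathbb K$, and we get $\mathrm{Tr}_{\mathbb K/F}(zw)=0$ for every $w\in\mathbb K$. The trace form $(z,w)\mapsto\mathrm{Tr}_{\mathbb K/F}(zw)$ is nondegenerate on a finite separable extension, which forces $z=0$, a contradiction. So $y=0$, and since $\mathbf q$ is a basis this gives $\lambda=0$. Thus $B_z$ is invertible, i.e. $\operatorname{rank}B_z=n$.

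Finally, the minimum distance is $n$: for distinct $B_z,B_{z'}\in\mathcal S$ we have $B_z-B_{z'}=B_{z-z'}$ with $z-z'\neq0$, which has rank $n$. The only substantive point is the combination of surjectivity of $R_y$ with nondegeneracy of the trace form. Note that the identity element $e$ of the semifield is not needed anywhere; only the absence of zero divisors is used.
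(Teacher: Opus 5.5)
Your proof is correct and complete. The $i$-th entry of $B_z\lambda$ is $\mathrm{Tr}_{\mathbb K/F}(z(p_i\circ y))$. For $y\neq 0$, bijectivity of $x\mapsto x\circ y$ turns its vanishing for all $i$ into $\mathrm{Tr}_{\mathbb K/F}(zw)=0$ for all $w\in\mathbb K$, and nondegeneracy of the trace form then forces $z=0$. Injectivity of $z\mapsto B_z$, the distance $n$, and the Singleton equality $q^{n(n-n+1)}=q^n$ follow as you describe.

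The paper gives no argument of its own for this lemma; it only cites \cite{deLaCruz2016}, so there is no competing route to compare. What you wrote is the standard self-contained verification. It shows that the trilinear form $(x,y,z)\mapsto \mathrm{Tr}_{\mathbb K/F}(z(x\circ y))$ is nonsingular in its $z$-slices because it is nonsingular in its $y$-slices.

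Your closing remark is also accurate: only the absence of zero divisors is used. So the lemma already holds for the presemifield $\star$ of Proposition~\ref{bilinear1}. The identity element does matter elsewhere in the paper. It is needed when the bases are normalized to $p_1=q_1=1$ to simplify the orthogonality equations, and in the computation of $y\circ y$ in Proposition~\ref{prop-keyequ}.
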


\subsection{Puncturing-inclusion constructions}

The puncturing--inclusion construction was introduced by Beeloo--Sauerbier Couvée and Neri in \cite{BeelooNeri2026} to study the Etzion--Silberstein conjecture. The main idea is to perform puncturing (deleting a row) and inclusion (embedding into a larger MRD code) on rectangular MRD codes, thereby constructing Ferrers diagram rank metric  codes. 

\begin{theorem}\cite[Theorem 7.5]{BeelooNeri2026}  \label{thm:d3-mrd-correspondence}
	Let $n,d\in\mathbb N$ with $1<d<n$ and  $$\mathcal{F}=[\underset{n-1}{\underbrace{n-1,n-1,\cdots,n-1}},n+d-2].$$ Then the following are equivalent.
	\begin{enumerate}
		\item[(1)] There exists an  optimal $[\mathcal F, \nu_{\min}(\mathcal F,d), d]_q$ Ferrers diagram rank-metric code.
		\item[(2)] [Puncturing--inclusion MRD condition] \ There exists an $[(n-1)\times n,n(n-d),d]_{q}$ MRD code $\mathcal{C}$ such that its puncturing on the last column is contained in an $[(n-1)\times(n-1),(n-1)(n-d+1),d-1]_{q}$ MRD code.
	\end{enumerate}
\end{theorem}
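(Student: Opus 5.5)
The plan is to first compute $\nu_{\min}(\mathcal F,d)$ explicitly and then prove each implication by a direct rank argument on block matrices. We use the convention that rows $1,\dots,n-1$ of $\mathcal F$ are full (they have $n$ dots each), and rows $n,\dots,n+d-2$ have a single dot, in the last column. Every matrix supported on $\mathcal F$ can therefore be written as
\[
M=\begin{pmatrix} A \\ \begin{matrix}0 & b\end{matrix}\end{pmatrix},\qquad A\in\mathcal M_{(n-1)\times n}(F),\ b\in F^{d-1},
\]
where the zero block has $d-1$ rows and $n-1$ columns.

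\textbf{Step 1: compute the bound.} For $i=0$ we delete the last $d-1$ columns, leaving $\nu_0=(n-1)(n-d+1)$. For $i=d-1$ we delete the first $d-1$ rows, leaving $\nu_{d-1}=n(n-d)+(d-1)$. Expanding gives $(n-1)(n-d+1)=n^2-nd+d-1$, so $\nu_0=\nu_{d-1}$. For $0<i<d-1$ the last column is deleted, so $\nu_i=(n-1-i)(n-d+1+i)$. This quantity is concave in $i$ and equals $\nu_0$ at $i=0$ and at $i=d-2$, so $\nu_i\ge\nu_0$ throughout. Hence $\nu_{\min}(\mathcal F,d)=n(n-d)+d-1=(n-1)(n-d+1)$. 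This coincidence of dimensions is what makes the two conditions match.

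\textbf{Step 2: (1)$\Rightarrow$(2).} Let $\mathcal M$ be an optimal code of dimension $n(n-d)+d-1$.
\begin{itemize}
\item The subcode $\{M\in\mathcal M: b=0\}$ is the kernel of a linear map to $F^{d-1}$. It therefore has dimension at least $n(n-d)$, and its top blocks $A$ form an $(n-1)\times n$ code $\mathcal C$ of minimum distance at least $d$.
\item By the Singleton-like bound, $\dim\mathcal C\le n(n-d)$. Hence $\mathcal C$ is an $[(n-1)\times n,n(n-d),d]_q$ MRD code.
\item Consider the map sending $M$ to the first $n-1$ columns of $A$. Deleting one column drops the rank by at most one, so every nonzero image has rank at least $d-1\ge1$. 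The map is therefore injective, and its image $\mathcal D$ has dimension $(n-1)(n-d+1)$ and distance at least $d-1$.
\item Thus $\mathcal D$ is an $[(n-1)\times(n-1),(n-1)(n-d+1),d-1]_q$ MRD code, and it contains the puncturing of $\mathcal C$ by construction.
\end{itemize}

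\textbf{Step 3: (2)$\Rightarrow$(1).} Let $\mathcal C$ and $\mathcal D$ be as in condition (2), and let $\pi$ denote puncturing on the last column.
\begin{itemize}
\item Since $d\ge2$, puncturing a nonzero codeword leaves rank at least $d-1\ge1$. So $\pi$ is injective on $\mathcal C$, and $\dim\pi(\mathcal C)=n(n-d)$.
\item Choose a complement $\mathcal E$ with $\mathcal D=\pi(\mathcal C)\oplus\mathcal E$. Then $\dim\mathcal E=d-1$, and we fix a linear isomorphism $\psi:\mathcal E\to F^{d-1}$.
\item For $A\in\mathcal C$ and $Y\in\mathcal E$, define $M(A,Y)$ with top block $A+(Y\mid 0)$ and extra last-column entries $b=\psi(Y)$. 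This gives a linear code on $\mathcal F$ of dimension $n(n-d)+d-1$.
\item If $Y=0$ and $A\neq0$, then $\operatorname{rank}M\ge\operatorname{rank}A\ge d$.
\item If $Y\ne0$, the first $n-1$ columns of $M$ are $\pi(A)+Y$ padded with zeros. This is a nonzero element of $\mathcal D$, so these columns have rank at least $d-1$. The last column has a nonzero entry in a row (among the bottom $d-1$) where the first $n-1$ columns vanish, so it is not in their span. Hence $\operatorname{rank}M\ge d$.
\end{itemize}
So this code is optimal by Step 1.

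The main care point is Step 1, namely checking that the intermediate $\nu_i$ never go below the common value $\nu_0=\nu_{d-1}$. The one potentially delicate argument is the independence of the last column in Step 3, which depends on the zero block lying below the first $n-1$ columns. Everything else is a direct Singleton-bound and dimension count.
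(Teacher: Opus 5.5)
Your proof is correct and complete. The paper does not prove this statement itself. It imports it from \cite[Theorem~7.5]{BeelooNeri2026}, after transposing and reversing the column order to match its convention.

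The only related argument in the paper is Remark~\ref{rem:mrd-ferrers-lift-correspondence}, which for $(n,d)=(4,3)$ writes down exactly your (2)$\Rightarrow$(1) construction. There, $B_5,B_6$ span a complement $\mathcal E$ of the punctured code $\mathcal L$ inside the $[3\times 3,6,2]_q$ code $\mathcal S^\perp$. Your $\psi$ is the coordinate map $aB_5+bB_6\mapsto(a,b)$, with $(a,b)$ placed in the two bottom entries of the last column.

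Relative to the paper, you add three things:
\begin{enumerate}
\item The explicit computation $\nu_{\min}(\mathcal F,d)=(n-1)(n-d+1)=n(n-d)+d-1$. The equality $\nu_0=\nu_{d-1}$ is what makes the dimensions in the two conditions match.
\item The rank verification for the constructed code, namely independence of the last column because of the zero block beneath the first $n-1$ columns. The Remark leaves this to the cited theorem.
\item The converse (1)$\Rightarrow$(2). You shorten to the subcode with $b=0$ and puncture, and the Singleton-like bound forces both resulting codes to be MRD.
\end{enumerate}
Working directly in the paper's column convention also avoids the transpose/reversal translation that the paper needs before applying the cited result.

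The only points you leave implicit are that the minimum distances are exactly $d$, $d$ and $d-1$ rather than merely at least these values. Each follows at once: the Singleton-like bound would be violated by a larger distance, and the constructed Ferrers code contains $\mathcal C$ as its $Y=0$ part, so it has codewords of rank exactly $d$.
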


The paper aims to prove    the puncturing--inclusion MRD  condition for $(n,d)=(4,3)$, the following proposition serves as the starting point of our study.

\begin{proposition}\label{prop:mrd-ferrers-lift}
	Let $\mathcal L$ be a subspace of $\mathcal{M}_3(F)$ with $\dim_F \mathcal L=4$. Suppose  that there exists an $F$-linear map $w:\mathcal L\to F^3$ such that
	$$
	\mathcal{K}=\Big\{\left[B\mid w(B)\right]\mid B\in \mathcal L\Big\}
	$$
	is a $[3\times4,4,3]_q$ MRD code. If there exists a $[3\times3,3,3]_q$ MRD code
	$
	\mathcal{S}\subseteq \mathcal L^\perp,
	$
	then there exists an optimal
	$
	[[3,3,3,5],6,3]_q
	$
	Ferrers diagram rank-metric code.
\end{proposition}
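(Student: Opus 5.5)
The plan is to reduce the statement to Theorem~\ref{thm:d3-mrd-correspondence} with $(n,d)=(4,3)$. With these values the diagram is $\mathcal F=[3,3,3,5]$, an $[(n-1)\times n,n(n-d),d]_q$ code is a $[3\times4,4,3]_q$ code, and an $[(n-1)\times(n-1),(n-1)(n-d+1),d-1]_q$ code is a $[3\times3,6,2]_q$ code. It therefore suffices to check condition (2): the code $\mathcal K$ is a $[3\times 4,4,3]_q$ MRD code whose puncturing on the last column lies in some $[3\times3,6,2]_q$ MRD code. Then (1) follows. Since $\nu_{\min}([3,3,3,5],3)=6$ was computed earlier, this gives an optimal $[[3,3,3,5],6,3]_q$ code.

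\emph{Step 1: identify the puncturing.} Deleting the last column of $[B\mid w(B)]$ leaves $B$, so the puncturing of $\mathcal K$ is $\mathcal L$. The puncturing map is injective on $\mathcal K$: if $B=0$ but $w(B)\neq 0$, then $[0\mid w(B)]$ would be a nonzero codeword of rank $1<3$. Hence it is a bijection onto $\mathcal L$, consistent with $\dim\mathcal L=4$.

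\emph{Step 2: build the containing MRD code.} Set $\mathcal M=\mathcal S^\perp\subseteq\mathcal M_3(F)$. Since $\mathcal S\subseteq\mathcal L^\perp$ and the Frobenius product is nondegenerate, taking orthogonals gives $\mathcal L=(\mathcal L^\perp)^\perp\subseteq\mathcal S^\perp=\mathcal M$. By Proposition~\ref{thm:delsarte-duality} with $m=n=3$ and $d=3$, $\mathcal S$ is a $[3\times3,3(3-3+1),3]_q=[3\times3,3,3]_q$ MRD code. Hence $\mathcal S^\perp$ is a $[3\times3,3(3-1),3-3+2]_q=[3\times3,6,2]_q$ MRD code, which contains $\mathcal L$. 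Together with Step 1, this verifies condition (2) of Theorem~\ref{thm:d3-mrd-correspondence}.

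The main obstacle is only bookkeeping. I would check that the numerical parameters match exactly; that the ``last column'' used in Theorem~\ref{thm:d3-mrd-correspondence} corresponds to the appended column $w(B)$ after the transposition and column reversal described in the convention remark; and that the hypothesis $2\le d$ of Proposition~\ref{thm:delsarte-duality} holds, which it does since $d=3$. If the convention mismatch were an issue, I would apply the same argument to the transposed codes, since transposition preserves rank and the MRD property.
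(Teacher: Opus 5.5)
Your proposal is correct and follows the paper's proof: from $\mathcal S\subseteq\mathcal L^\perp$ you get $\mathcal L\subseteq\mathcal S^\perp$, Delsarte duality makes $\mathcal S^\perp$ a $[3\times3,6,2]_q$ MRD code, and Theorem~\ref{thm:d3-mrd-correspondence} with $(n,d)=(4,3)$ then gives the optimal code. Your injectivity check in Step~1 is harmless but unnecessary, since $w$ is linear and hence $w(0)=0$ automatically.
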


\begin{proof}
	Suppose that there exists a $[3\times3,3,3]_q$ MRD code
	$
	\mathcal{S}\subseteq \mathcal L^\perp,
	$
	then $\mathcal L\subseteq \mathcal{S}^\perp$. By Proposition \ref{thm:delsarte-duality}, $\mathcal{S}^\perp$ is a $[3\times3,6,2]_q$ MRD code. Thus  the puncturing--inclusion MRD condition holds for $(n,d)=(4,3)$. By Theorem \ref{thm:d3-mrd-correspondence}, there exists an optimal
	$
	[[3,3,3,5],6,3]_q
	$
	Ferrers diagram rank-metric code.
\end{proof}

\begin{remark}\label{rem:mrd-ferrers-lift-correspondence}
	We describe how to obtain an optimal Ferrers diagram rank-metric code from the MRD code $\mathcal S$
	in Proposition \ref{prop:mrd-ferrers-lift}. This process is precisely the puncturing--inclusion construction introduced in \cite{BeelooNeri2026}.
	Let $\{B_i\mid 1\leq i\leq 6\}$ be a basis of $\mathcal S^\perp$ such that $\{B_1,B_2,B_3,B_4\}$ is a basis of $\mathcal L$. Define
	$$
	K_i=\begin{pmatrix}
		B_i & w(B_i)\\
		O_{1\times 3}&0\\
		O_{1\times 3} &0
	\end{pmatrix}
	(1\leq i\leq 4),
\quad 
	K_5=\begin{pmatrix}
		B_5 & O_{3\times 1}\\
		O_{1\times 3}&1\\
		O_{1\times 3} &0
	\end{pmatrix},
	\quad
	K_6=\begin{pmatrix}
		B_6 & O_{3\times 1}\\
		O_{1\times 3}&0\\
		O_{1\times 3} &1
	\end{pmatrix},
	$$
	Then
	$$
	\mathcal C=\operatorname{span}\{K_i\mid 1\leq i\leq 6\}=\left\{\Gamma(B,a,b)\mid B\in \mathcal L, (a,b)\in F^2\right\}
	$$
	is the desired optimal Ferrers diagram rank-metric code, where
	$$\Gamma(B,a,b)=\begin{pmatrix}
		B+aB_5+bB_6&w(B)\\
		O_{1\times 3}&a\\
		O_{1\times 3} &b
	\end{pmatrix}$$
\end{remark}

In the sequel, we will construct optimal $[[3,3,3,5],6,3]_q$ Ferrers diagram rank-metric codes based on Proposition \ref{prop:mrd-ferrers-lift}. Specifically, we will first construct a $[3\times4,4,3]_q$ MRD code, and then prove that the Delsarte dual of its puncturing on the  last  column contains a $[3\times3,3,3]_q$ MRD code, and this contained three-dimensional MRD code is chosen from semifield codes.

\section{Proof of the main results when \texorpdfstring{$q$}{q} is odd}\label{sec:odd}

In this section, we prove that there exists an optimal \([[3,3,3,5],6,3]_q\) Ferrers diagram rank-metric code for $q$ odd.
 We begin our proof with the construction of MRD codes of full row rank. 
 Throughout this section, $q$ is assumed to be odd.
 
\subsection{Construction of full row rank MRDs}

Let
$
M:F^{m+1}\rightarrow \mathcal M_{m\times(m+1)}(F)
$
be an \(F\)-linear map, and write
$
X=(x_1,x_2,\dots,x_{m+1})^{\mathsf T}.
$
For \(u=(a_1,\dots,a_m)\in F^m\), the bilinear map
\((u,X)\longmapsto uM(X)\) uniquely determines an \((m+1)\times(m+1)\) matrix \(H_M(u)\) such that
\begin{equation}\label{eq:left-kernel-identity}
	uM(X)=\big(H_M(u)X\big)^T.
\end{equation}

\begin{proposition}\label{prop:full-row-rank}
	Assume that \(M\) is injective. If \(H_M(u)\) is invertible for all nonzero vectors \(u\in F^m\), then
	\[
	\mathcal C=\{M(X)\mid X\in F^{m+1}\}
	\]
	is an \([m\times(m+1),m+1,m]_q\) MRD code.
\end{proposition}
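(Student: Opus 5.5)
Since $M$ is injective and linear, $\mathcal C$ is an $F$-subspace of $\mathcal M_{m\times(m+1)}(F)$ of dimension $m+1$. The plan is to show that every nonzero $M(X)$ has rank $m$, i.e.\ full row rank. Once that is established, the minimum rank distance of $\mathcal C$ is $m$. The Singleton-like bound with $\min\{m,m+1\}=m$, $\max=m+1$ and $\delta=m$ gives $|\mathcal C|\le q^{(m+1)(m-m+1)}=q^{m+1}$. Our code has exactly $q^{m+1}$ elements, so it is MRD with parameters $[m\times(m+1),m+1,m]_q$.

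To prove full row rank, argue by contradiction. Suppose $X\neq 0$ and $\operatorname{rank}M(X)<m$. Then the rows of $M(X)$ are linearly dependent, so there is a nonzero row vector $u\in F^m$ with $uM(X)=0$. By the defining identity \eqref{eq:left-kernel-identity}, $(H_M(u)X)^T=uM(X)=0$, hence $H_M(u)X=0$. By hypothesis $H_M(u)$ is invertible for nonzero $u$, which forces $X=0$, a contradiction. Therefore $M(X)$ has rank $m$ for all $X\neq 0$. Because $\mathcal C$ is linear, the difference of two distinct codewords is a nonzero codeword, so every pair of distinct codewords is at rank distance $m$.

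The only point needing care is the existence and well-definedness of $H_M(u)$. The map $(u,X)\mapsto uM(X)$ is bilinear, so for fixed $u$ it is a linear map $F^{m+1}\to F^{m+1}$, and $H_M(u)$ is its matrix. This is given in the setup, so there is no real obstacle; the argument is essentially the left-kernel reformulation above.
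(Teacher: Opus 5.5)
Your proposal is correct and follows the paper's proof: the left-kernel identity \(uM(X)=(H_M(u)X)^T\) together with invertibility of \(H_M(u)\) for \(u\neq 0\) shows that every nonzero codeword has full row rank \(m\), and injectivity of \(M\) gives dimension \(m+1\). Your explicit check that \(q^{m+1}\) meets the Singleton-like bound just spells out the MRD conclusion that the paper states directly.
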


\begin{proof}
	Let \(u\in F^m\) be a nonzero vector. Suppose that \(uM(X)=0\). Then \eqref{eq:left-kernel-identity} gives \(H_M(u)X=0\).  By assumption, \(H_M(u)\) is invertible. It follows that \(X=0\). Since \(M\) is injective, we get \(M(X)=0\). Consequently, each nonzero matrix \(M(X)\) is of full row rank, namely, rank \(M(X)=m\). This proves that \(\mathcal C\) is an \([m\times(m+1),m+1,m]_q\) MRD code.
\end{proof}

According to  Proposition \ref{prop:full-row-rank},  one can construct a \([3\times 4,4,3]_q\) MRD code by finding  a determinant in three variables that is invertible for every nonzero choice of the variables. A natural approach is to exploit special properties of field elements, such as nonsquare elements or trace-zero elements. Therefore, as long as we  find a polynomial in three variables with no nonzero zero over $F$ and construct the corresponding determinant, we can obtain an MRD code satisfying the required conditions. One candidate polynomial  is
\[
f(x,y,z)=\Delta^3x^4-\bigl[z^2+\Delta(tx^2-y^2)\bigr]^2,
\]  
where \(t\in F\) and \(\Delta\) is a nonsquare element.
In fact, suppose that  $f(a,b,c)=0$ for $a,b,c\in F$. 
If \(a\ne0\), then
\[
\Delta=\left(\frac{c^2+\Delta(ta^2-b^2)}{\Delta a^2}\right)^2,
\]
which contradicts the fact that \(\Delta\) is nonsquare.  It follows that  \(a=0\) and thus  \(c^2=\Delta b^2\). Now  the nonsquare property  of  $\Delta$ forces \(b=c=0\). Consequently,  for every nonzero \((a,b,c)\), $f(a,b,c)\neq 0.$

 Based on the polynomial $f(x,y,z)$, we present the corresponding determinant:
$$ \det H_{\Delta,t}(a,b,c)=	\begin{vmatrix}
	a&0&b&c\\
	b&c&at&a\Delta\\
	c&\Delta b&-\Delta a&-t\Delta a\\
	0&\Delta a&-c&-\Delta b
\end{vmatrix}=\Delta^3a^4-\bigl[c^2+\Delta(ta^2-b^2)\bigr]^2.$$
Let $(R,S,C,E)\in F^4$. By \eqref{eq:left-kernel-identity}, we have 
	$$ (a,b,c)\begin{pmatrix}
		R&tC+\Delta E&-\Delta C-t\Delta E&\Delta S\\
		C&R&\Delta S&-\Delta E\\
		E&S&R&-C
	\end{pmatrix}=	\left(\begin{pmatrix}
	a&0&b&c\\
	b&c&at&a\Delta\\
	c&\Delta b&-\Delta a&-t\Delta a\\
	0&\Delta a&-c&-\Delta b
\end{pmatrix}\begin{pmatrix}
	R\\S\\C\\E
\end{pmatrix}\right)^T.$$

Since  for every nonzero \((a,b,c)\), $f(a,b,c)\neq 0$, we get that  \(H_{\Delta,t}(a,b,c)\) is invertible. As a direct consequence of Proposition \ref{prop:full-row-rank}, we get the following theorem.
\begin{theorem}\label{thm:odd-core}
	Let  \(\Delta\in F\)  be a nonsquare element and  \(t\in F\). For
	$(R,S,C,E)\in F^4$, set 
	\begin{equation*}\label{eq:odd-core}
		M_{\Delta,t}(R,S,C,E)=
		\begin{pmatrix}
			R&tC+\Delta E&-\Delta C-t\Delta E&\Delta S\\
			C&R&\Delta S&-\Delta E\\
			E&S&R&-C
		\end{pmatrix}.
	\end{equation*}
	Then
	\[
	\mathcal{K}_{\Delta,t}=\{M_{\Delta,t}(R,S,C,E)\mid R,S,C,E\in F\}
	\]
	is a \([3\times4,4,3]_q\) MRD code.
\end{theorem}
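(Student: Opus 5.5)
The plan is to apply Proposition~\ref{prop:full-row-rank} with $m=3$ to the linear map $M=M_{\Delta,t}:F^4\to\mathcal M_{3\times4}(F)$, $X=(R,S,C,E)^{\mathsf T}\mapsto M_{\Delta,t}(R,S,C,E)$. Two hypotheses have to be checked: that $M_{\Delta,t}$ is injective, and that for every nonzero $u=(a,b,c)\in F^3$ the matrix $H_M(u)$ of \eqref{eq:left-kernel-identity} is invertible. Once both hold, Proposition~\ref{prop:full-row-rank} immediately gives that $\mathcal K_{\Delta,t}$ is a $[3\times4,4,3]_q$ MRD code.

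\emph{Injectivity.} The $(1,1)$, $(3,2)$, $(2,1)$ and $(3,1)$ entries of $M_{\Delta,t}(R,S,C,E)$ are $R,S,C,E$ respectively. Hence $M_{\Delta,t}(X)=0$ forces $X=0$. This is also what makes $\dim_F\mathcal K_{\Delta,t}=4$.

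\emph{Identification of $H_M(u)$.} Multiply out $(a,b,c)M_{\Delta,t}(R,S,C,E)$ column by column and read off the coefficients of $R,S,C,E$. For example:
\begin{itemize}
\item the first column gives $aR+bC+cE$;
\item the second gives $cS+(at)C+(a\Delta)E+bR$;
\item the third gives $cR+\Delta bS-\Delta aC-t\Delta aE$;
\item the fourth gives $\Delta aS-cC-\Delta bE$.
\end{itemize}
Ordering each row as $(R,S,C,E)$ recovers exactly $H_{\Delta,t}(a,b,c)$, which confirms the displayed identity preceding the theorem.

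\emph{Invertibility of $H_M(u)$.} It then suffices to verify the determinant formula
\[
\det H_{\Delta,t}(a,b,c)=\Delta^3a^4-\bigl[c^2+\Delta(ta^2-b^2)\bigr]^2 = f(a,b,c),
\]
because the argument already given shows that $f$ has no nonzero zero over $F$ when $\Delta$ is a nonsquare. (The case $a\ne0$ would make $\Delta$ a square. The case $a=0$ gives $c^2=\Delta b^2$, forcing $b=c=0$.)

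The main obstacle is the determinant computation itself. I would do a cofactor expansion along the first row, whose entries are $(a,0,b,c)$, so only three $3\times3$ minors appear. Expand each minor, group the result as a polynomial in $a$, and compare with the expansion of $f$:
\[
f=\Delta^3a^4-c^4-2\Delta c^2(ta^2-b^2)-\Delta^2(ta^2-b^2)^2 .
\]
As consistency checks:
\begin{itemize}
\item at $a=0$ the matrix becomes block-structured and the determinant should reduce to $-(c^2-\Delta b^2)^2$;
\item the $a^4$ coefficient should come out as $\Delta^3-\Delta^2t^2$.
\end{itemize}
Once the identity is confirmed, nonvanishing for $(a,b,c)\ne0$ gives invertibility, and the theorem follows.
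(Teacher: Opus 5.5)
Your proposal is correct and follows the paper's own argument: apply Proposition~\ref{prop:full-row-rank} to $H_{\Delta,t}(a,b,c)$, whose determinant is $f(a,b,c)=\Delta^3a^4-\bigl[c^2+\Delta(ta^2-b^2)\bigr]^2$, and use that $f$ has no nonzero zero when $\Delta$ is a nonsquare. Your explicit injectivity check (reading off $R,S,C,E$ from four entries) and the first-row cofactor expansion, which does reproduce $f$ exactly with the $a^2bc$ terms cancelling, supply details that the paper only states without proof.
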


\subsection{The Orthogonality Conditions}

To perform the puncturing-inclusion construction, we write \(M_{\Delta,t}=[A_{\Delta,t}\mid w]\), where
\begin{equation*}\label{eq:odd-Aw}
	A_{\Delta,t}=\begin{pmatrix}
		R&tC+\Delta E&-\Delta C-t\Delta E\\
		C&R&\Delta S\\ E&S&R
	\end{pmatrix},\qquad
	w=\begin{pmatrix}\Delta S\\-\Delta E\\-C\end{pmatrix}.
\end{equation*}
Set  \(\mathcal L_{\Delta,t}=\{	A_{\Delta,t}\mid R,S,C,E\in F\}\). Based on Proposition \ref{prop:mrd-ferrers-lift}, we proceed to prove that $\mathcal L_{\Delta,t}^\perp$ contains a $[3\times3,3,3]_q$
MRD code and 
we find such a code among semifield codes over \(K\).
  
Consider the  cubic extension \(K/F\) and let \((K,\circ)\) be a semifield.
Let \(\mathbf p=(p_1,p_2,p_3)\) and \(\mathbf q=(q_1,q_2,q_3)\) be two bases of $K$ over $F$.  Set 
\[
B_z=\Bigl(\Tr_{K/F}\big(z(p_i\circ q_j)\big)\Bigr)_{i,j},\qquad
\mathcal S=\{B_z:z\in K\}.
\]
It follows from Lemma \ref{lem:trace-spread}  that  $\mathcal S$ is a $[3\times3,3,3]_q$
MRD code.
Write
\[
P_{ij}=p_i\circ q_j.
\]
From \eqref{eq:odd-Aw} and the definition of \(B_z\), pairing term by term gives
\[
\begin{aligned}
	\langle A_{\Delta,t},B_z\rangle
	=\Tr_{K/F}\bigl(z\{&
	R(P_{11}+P_{22}+P_{33})+S(\Delta P_{23}+P_{32})\\
	&+C(tP_{12}-\Delta P_{13}+P_{21})+E(\Delta P_{12}-t\Delta P_{13}+P_{31})\}\bigr).
\end{aligned}
\]
Since \(R,S,C,E\) range freely over \(F\) and \(z\) ranges freely over \(K\), by the nondegeneracy of the trace pairing,
we get that $
\mathcal S\subseteq \mathcal L_{\Delta,t}^{\perp}
$
if and only if the four coefficients vanish in \(K\), that is,
\begin{align}
	p_1\circ q_1+p_2\circ q_2+p_3\circ q_3&=0,\label{eq:O1}\\
	\Delta p_2\circ q_3+p_3\circ q_2&=0,\label{eq:O2}\\
	tp_1\circ q_2-\Delta p_1\circ q_3+p_2\circ q_1&=0,\label{eq:O3}\\
	\Delta p_1\circ q_2-t\Delta p_1\circ q_3+p_3\circ q_1&=0.\label{eq:O4}
\end{align}

 Set 
	$
	\mathbf p=(1,x,y),\ q_1=1,\ \delta=\Delta-t^2.
	$
 Since $\Delta$ is nonsquare element, we have  \(\delta\ne0\) and \eqref{eq:O3}--\eqref{eq:O4} uniquely determine
	\begin{equation}\label{eq:odd-r23}
		q_2=\frac{tx-y}{\delta},\qquad
		q_3=\frac{\Delta x-ty}{\Delta\delta}.
	\end{equation}
	After substituting back, \eqref{eq:O1}--\eqref{eq:O2} are equivalent to
	\begin{align}
		x\circ y-y\circ x&=\Delta,\label{eq:odd-target1}\\
		y\circ y-\Delta(x\circ x)&=-t\Delta.\label{eq:odd-target2}
	\end{align}

\begin{proposition}\label{prop-keyequ} Let $(K,\circ)$ be an $F$-semifield and fix a nonsquare $\Delta\in F$.
	 If there exists a basis $(1,u,v)$ of $K$ over $F$  such that
	\begin{equation*}
	u\circ v-v\circ u\in F^*,  \label{keyequ}
\end{equation*}
	then there exist $t\in F$ and a semifield code $\mathcal S$ such that $
	\mathcal S\subseteq \mathcal L_{\Delta,t}^{\perp}
	$.
\end{proposition}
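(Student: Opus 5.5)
The plan is to use the reduction made just before the statement. If $\mathbf p=(1,x,y)$, $q_1=1$, and $q_2,q_3$ are given by \eqref{eq:odd-r23}, then the orthogonality conditions \eqref{eq:O1}--\eqref{eq:O4} are equivalent to \eqref{eq:odd-target1}--\eqref{eq:odd-target2}. It therefore suffices to produce $x,y\in K$ and $t\in F$ such that:
\begin{itemize}
\item $(1,x,y)$ is an $F$-basis of $K$;
\item $x\circ y-y\circ x=\Delta$;
\item $y\circ y-\Delta(x\circ x)=-t\Delta$;
\item the resulting $\mathbf q$ is also an $F$-basis.
\end{itemize}
Lemma~\ref{lem:trace-spread} then makes $\mathcal S=\{B_z\}$ a $[3\times3,3,3]_q$ MRD code, and the equivalence gives $\mathcal S\subseteq\mathcal L_{\Delta,t}^\perp$.

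Let $\lambda=u\circ v-v\circ u\in F^*$. I will look for $x,y$ of the form $x=a+u$ and $y=b+\mu v$, where $\mu=\Delta/\lambda$ and $a,b\in F$ are still to be chosen. Since $1$ is the identity of $(K,\circ)$ and $\circ$ is $F$-bilinear, every product involving the scalar parts commutes. Hence
\[
x\circ y-y\circ x=\mu(u\circ v-v\circ u)=\mu\lambda=\Delta,
\]
for every choice of $a,b$. Moreover $(1,x,y)$ is a basis, because $(1,u,v)$ is a basis and $\mu\neq0$.

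The main step is \eqref{eq:odd-target2}, which says that $y\circ y-\Delta(x\circ x)$ must lie in $F$. Expanding by bilinearity gives
\[
y\circ y-\Delta(x\circ x)=\bigl(b^2-\Delta a^2\bigr)+2b\mu\,v-2\Delta a\,u+\bigl(\mu^2\,v\circ v-\Delta\,u\circ u\bigr).
\]
Write $\mu^2\,v\circ v-\Delta\,u\circ u=c_0+c_1u+c_2v$ with $c_i\in F$, using the basis $(1,u,v)$. The condition then becomes the linear system
\[
-2\Delta a+c_1=0,\qquad 2\mu b+c_2=0.
\]
Because $q$ is odd and $\Delta,\mu\neq0$, this system has the unique solution $a=c_1/(2\Delta)$, $b=-c_2/(2\mu)$. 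With this choice $y\circ y-\Delta(x\circ x)\in F$, and we define $t\in F$ by $-t\Delta:=y\circ y-\Delta(x\circ x)$. This is the step where oddness of $q$ is used; in general one would allow arbitrary invertible $F$-combinations of $u,v$, and the linear terms would still span $K/F$.

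It remains to check that $\mathbf q$ is a basis. Since $\Delta$ is a nonsquare, $\delta=\Delta-t^2\neq0$, so $q_2,q_3$ in \eqref{eq:odd-r23} are well defined. Modulo $F$, they are the images of $x,y$ under the matrix $\begin{pmatrix} t/\delta & -1/\delta\\ 1/\delta & -t/(\Delta\delta)\end{pmatrix}$, whose determinant is $(\Delta-t^2)/(\Delta\delta^2)=1/(\Delta\delta)\neq0$. Since $x,y$ are independent modulo $F$, so are $q_2,q_3$, and hence $(1,q_2,q_3)$ is a basis. I expect no serious obstacle beyond verifying that the substitution of \eqref{eq:odd-r23} indeed turns \eqref{eq:O1}--\eqref{eq:O2} into \eqref{eq:odd-target1}--\eqref{eq:odd-target2}, which the text asserts and which I take as given.
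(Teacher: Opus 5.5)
Your proposal is correct and follows the paper's proof step for step: rescale $v$ by $\Delta/\lambda$ so the commutator equals $\Delta$, then shift $u$ and the rescaled $v$ by scalars $a,b$, which leaves the commutator unchanged because $1$ is the identity. You then solve two linear equations (using $2\in F^*$) to remove the $u$- and $v$-coordinates of $y\circ y-\Delta(x\circ x)$, read off $t$ from the constant term, and check that $(1,q_2,q_3)$ is a basis via the determinant $1/(\Delta\delta)$. The only difference is cosmetic (you expand in $(1,u,v)$ rather than $(1,u,v_0)$). Your closing aside about other characteristics is unnecessary: oddness of $q$ is already forced by the existence of a nonsquare $\Delta$, and in characteristic $2$ scalar shifts produce no linear terms at all.
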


\begin{proof} Based on the above statements, it suffices to show that there exists  a basis $(1,x,y)$ of $K$ over $F$  satisfying  (\ref{eq:odd-target1}) and (\ref{eq:odd-target2}).
	Let  $(1,u,v)$  be a basis of $K$ over $F$  such that
$$
		u\circ v-v\circ u=\lambda\in F^*.
$$
Set $v_0=\frac{\Delta}{\lambda}v$. By the 	bilinearity of $\circ$, we get that 
$u\circ  v_0-v_0\circ u=\Delta.$
Set
$
x=u+a,
y=v_0+b,
$ we seek $a,b\in F$ such that
\[
y\circ y-\Delta(x\circ x)=-t\Delta.
\]
Since  $1$ is a two-sided identity and $a,b\in F$, we have 
\begin{align*}
	y\circ y
	&=(v_0+b)\circ(v_0+b)\\
	&=v_0\circ v_0+b(v_0\circ1)+b(1\circ v_0)+b^2(1\circ1)\\
	&=v_0\circ v_0+2bv_0+b^2.
\end{align*}
Similarly,
$
x\circ x=u\circ u+2au+a^2.
$
Therefore
\begin{align*}
	y\circ y-\Delta(x\circ x)
	={}&v_0\circ v_0-\Delta(u\circ u)\\
	&+2bv_0-2\Delta au+(b^2-\Delta a^2).
\end{align*}
Suppose $v_0\circ v_0-\Delta(u\circ u)=s_1+s_2u+s_3v_0$ with $(s_1,s_2,s_3)\in F^3$.
Then $y\circ y-\Delta(x\circ x)=-t\Delta$ if and only if 
$$\begin{array}{ll}&s_1+s_2u+s_3v_0+2bv_0-2\Delta au+(b^2-\Delta a^2)\\
=&(s_1+b^2-\Delta a^2)+(s_2-2\Delta  a) u+(s_3+2b)v_0
\\
=&-t\Delta.
\end{array}$$
Set $a=\frac{s_2}{2\Delta}, b=-\frac{s_3}{2}, t=-\frac{s_1+b^2-\Delta a^2}{\Delta}$. 
Then we get that $$y\circ y-\Delta(x\circ x)=-t\Delta.$$
By the bilinearity of $\circ$, one can check that $$x\circ y-y\circ x=\Delta.$$ Finally, since $(1,u,v)$ is an $F$-basis of $K$, one can check that 
$(1,x,y)$ is also an $F$-basis of $K$. Moreover, the linear transformation from $(x,y)$ to $(q_2,q_3)$ in~\eqref{eq:odd-r23} has determinant $1/(\Delta\delta)\ne0$. Hence $(1,q_2,q_3)$ is an $F$-basis of $K$ as required. This completes the proof of the proposition.
\end{proof}

\begin{remark}
By Proposition \ref{prop-keyequ}, to construct a semifield code \(\mathcal{S}\) such that
$
\mathcal{S}\subseteq \mathcal L_{\Delta,t}^{\perp},
$
the key point is to choose a suitable basis \((1,u,v)\) of $K$ over $F$ satisfying $u\circ v-v\circ u\in F^*$.
\end{remark}

\subsection{Proof of the main result for \texorpdfstring{$\operatorname{char}(F)\notin\{2,3\}$}{char(F) not in (2,3)}}

In this subsection, we suppose that char$(F)$ is odd and char$(F)\neq 3$.  Then there exists $c\in F^*\setminus\{1,-1\}$ such that $c^3\neq -1$. Since    char$(F)\neq 3$, we get $ \Tr_{K/F}(1)=3\neq 0$. It follows that $K=F\oplus W.$ Recall that $W=\mathrm{ker}\ ( \Tr_{K/F}).$
Let $(u,v)$ be an  $F$-basis of $W$. Then $(1,u,v) $ is an $F$-basis of $K$.
It follows from Proposition  \ref{prop:trace-zero-master}  and Lemma \ref{lem:trace-zero-master}  that 
$$u\circ v-v\circ u
 =\textstyle{\frac{3c(1-c^2)}{(c^2-c+1)^2}}
 \bigl(uv^q-u^qv\bigr)\in F^*.$$
Then we get  from Proposition \ref{prop-keyequ} that there exists  a semifield code \(\mathcal{S}\) such that
$
\mathcal{S}\subseteq \mathcal L_{\Delta,t}^{\perp}.
$
 Finally, it follows  from Proposition \ref{prop:mrd-ferrers-lift} that 
  there exists an optimal
 $
 [[3,3,3,5],6,3]_q
 $
 Ferrers diagram rank-metric code. 
 Consequently, we have proved  the following theorem.

 \begin{theorem}\label{thm:charno3-general}
 	Suppose that $q$ is odd and char $\mathbb F_q\neq 3.$ Then there exists an optimal
 	$$
 	[[3,3,3,5],6,3]_q
 	$$
 	Ferrers diagram rank-metric code. 
 \end{theorem}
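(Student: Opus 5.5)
The plan is to chain the results already established in this section, so the proof reduces to checking that the hypotheses of each step hold when $q$ is odd and $\operatorname{char}F\neq 3$.

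\emph{The MRD code $\mathcal K_{\Delta,t}$.} Since $q$ is odd, $F$ contains a nonsquare $\Delta$. For every $t\in F$, Theorem~\ref{thm:odd-core} gives a $[3\times4,4,3]_q$ MRD code $\mathcal K_{\Delta,t}$. Its first three columns span $\mathcal L_{\Delta,t}$, and $w$ supplies the last column. The map $(R,S,C,E)\mapsto A_{\Delta,t}$ is injective, since $R,S,C,E$ can be read off entries $(1,1),(3,2),(2,1),(3,1)$. Hence $\dim\mathcal L_{\Delta,t}=4$, and $\mathcal K_{\Delta,t}$ has the form required in Proposition~\ref{prop:mrd-ferrers-lift}.

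\emph{The semifield.} Choose $c\in F^*$ with $c\neq\pm1$ and $c^3\neq-1$. This is possible because the excluded values number at most five, and I would check that when $q\ge 5$, $q$ odd, such a $c$ remains. The borderline case is $q=5$: here $F^*=\{1,2,3,4\}$, we exclude $\pm1=\{1,4\}$, and $c^3=-1$ means $c^3=4$. We have $2^3=3$ and $3^3=2$, so $c=2$ works. (The case $q=3$ is excluded by $\operatorname{char}F\neq3$.) Propositions~\ref{bilinear1} and~\ref{bilinear2} then give the semifield $(K,\circ)$ with identity $1$.

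\emph{The commutator condition.} Because $3\neq0$ in $F$, we have $\operatorname{Tr}_{K/F}(1)=3\neq0$. Therefore $K=F\oplus W$, and for any $F$-basis $(u,v)$ of $W$ the triple $(1,u,v)$ is an $F$-basis of $K$. By Proposition~\ref{prop:trace-zero-master}(2),
\[
u\circ v-v\circ u=\frac{3c(1-c^2)}{(c^2-c+1)^2}\,\Omega(u,v).
\]
The coefficient is nonzero because $3\neq0$, $c\neq0$ and $c\neq\pm1$; its denominator is nonzero because $c^3\neq-1$. By Lemma~\ref{lem:trace-zero-master}(2), $\Omega(u,v)\in F^*$. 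Thus $u\circ v-v\circ u\in F^*$.

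\emph{Conclusion.} Proposition~\ref{prop-keyequ} now gives some $t\in F$ and a semifield code $\mathcal S\subseteq\mathcal L_{\Delta,t}^\perp$. By Lemma~\ref{lem:trace-spread}, $\mathcal S$ is a $[3\times3,3,3]_q$ MRD code. Apply Proposition~\ref{prop:mrd-ferrers-lift} to this $\mathcal L_{\Delta,t}$ and $\mathcal S$. The argument is consistent because Theorem~\ref{thm:odd-core} holds for every $t$, so the $t$ produced by Proposition~\ref{prop-keyequ} is admissible.

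The only delicate step is the existence of $c$ and the nonvanishing of the commutator coefficient. Both hinge on $\operatorname{char}F\neq3$ and on small-field availability, and both are settled by the count above.
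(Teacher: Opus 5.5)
Your proposal is correct and follows the paper's own proof essentially step for step. You choose $c\in F^*\setminus\{\pm1\}$ with $c^3\neq-1$, use $\operatorname{Tr}_{K/F}(1)=3\neq0$ to get a basis $(1,u,v)$ with $u,v\in W$, obtain a nonzero scalar commutator from Proposition~\ref{prop:trace-zero-master} and Lemma~\ref{lem:trace-zero-master}, and conclude via Propositions~\ref{prop-keyequ} and~\ref{prop:mrd-ferrers-lift}; your extra checks (existence of $c$ including the case $q=5$, $\dim\mathcal L_{\Delta,t}=4$, and the nonvanishing of $3c(1-c^2)/(c^2-c+1)^2$) simply make explicit details the paper leaves implicit.
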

 
 We now illustrate Theorem \ref{thm:charno3-general} with an example.
 \begin{example}[An explicit code over $\F_{7}$]\label{ex:q7}
 	Let $F=\F_7$ and $K=F(\beta)$, where $\beta^3+2=0$. The displayed cubic has no root in $F$, so it is irreducible. Choose
 	\[c=2,\qquad u=\beta,\qquad v=\beta^2,\qquad \lambda=u\circ v-v\circ u=6.\]
 	Both $u$ and $v$ have trace zero, and $(1,u,v)$ is an $F$-basis of $K$.
 	Take the nonsquare $\Delta=3$. The normalization gives
 	\[v_0={4}\beta^2,\qquad v_0\circ v_0-\Delta(u\circ u)={2}u+{3}v_0.\]
 	Consequently, we obtain
 	\[\begin{aligned}x&=5+\beta,&y&=2+{4}\beta^2,\\t&=5,&\delta&=6,\\q_2&=5+{2}\beta+{4}\beta^2,&q_3&=3+{6}\beta+{2}\beta^2.\end{aligned}\]
 	Then we get  two desired bases:
 	$$\mathbf p=(1,5+\beta,2+4\beta^2),\quad \mathbf q=(1,5+2\beta+4\beta^2,3+6\beta+2\beta^2)$$
 The trace calculation yields
 \[
 B_1=
 \begin{pmatrix}
 	3 & 1 & 2 \\
 	1 & 6 & 0 \\
 	6 & 0 & 5
 \end{pmatrix},
 \qquad
 B_\beta=
 \begin{pmatrix}
 	0 & 4 & 2 \\
 	0 & 5 & 0 \\
 	4 & 0 & 2
 \end{pmatrix},
 \qquad
 B_{\beta^2}=
 \begin{pmatrix}
 	0 & 2 & 6 \\
 	1 & 1 & 5 \\
 	0 & 6 & 6
 \end{pmatrix}.
 \]
 A complement of $\mathcal L$ in $\mathcal S^\bot$ is supplied by
 \[H_1=\begin{pmatrix}2&0&1\\0&1&0\\0&0&0\end{pmatrix},\qquad H_2=\begin{pmatrix}4&4&6\\0&0&1\\0&0&0\end{pmatrix}.\]
 	The resulting code is $\mathcal C_{7}=\{N_{7}(R,S,C,E,a,b)\mid R,S,C,E,a,b\in F\}$, where
 	\begin{equation*}\label{eq:example-code-7}N_{7}(R,S,C,E,a,b)=\begin{pmatrix}R+{2}a+{4}b&{5}C+{3}E+{4}b&{4}C+{6}E+a+{6}b&{3}S\\C&R+a&{3}S+b&{4}E\\E&S&R&{6}C\\0&0&0&a\\0&0&0&b
 		\end{pmatrix}.\end{equation*}
 	A SageMath computation shows that the rank distribution of $\mathcal C_7$ is
 	\[
 	(A_0,A_1,A_2,A_3,A_4)=(1,0,0,21552,96096),
 	\]
 	where $A_r=\#\{M\in\mathcal C_7:\operatorname{rank}M=r\}$.
 	
 \end{example}

\subsection{Proof of the main result for \texorpdfstring{$\operatorname{char}(F)=3$}{char(F)=3}}

In this subsection, assume $q=3^m$ with $m\ge2$, so $\operatorname{char}(K)=3$.
In this case,
$
\Tr_{K/F}(1)=3=0,
$
so $1\in W$. Since $\dim_F W=2$, if $u,v\in W$, then $1,u,v$ all lie in the same two-dimensional subspace and cannot form a basis of $K/F$. Consequently, the strategy employed in the case \(\operatorname{char}(F)\neq 3\) is no longer applicable.

Choose $c\in F^*\setminus\{1,-1\}$ such that $c^3\neq -1$, which allows us to define a semifield $(K,\circ)$ as in Proposition \ref{bilinear2}. Set $e=1+c, r=\frac{c}{e}$.
 To find an $F$-basis \((1,u,v)\) of \(K\) such that
$
u\circ v-v\circ u=\lambda\in F^*,
$
let  $I$ be the identity map on $K$. Since $\sigma^3=I, $  we have  $(\sigma-I)^3=0$ and $(\sigma-I)^2\neq 0$. In characteristic $3$, $(\sigma-I)^2=I+\sigma+\sigma^2=\Tr_{K/F}$. The trace map is surjective, so choose $v\in K$ such that $(\sigma-I)^2(v)=1$, and set $u=(\sigma-I)(v)$. To verify linear independence, apply $(\sigma-I)^2$ and then $\sigma-I$ to a relation $a+bu+dv=0$ with $a,b,d\in F$: this gives $d=b=a=0$. Thus $(1,u,v)$ is an $F$-basis of $K$, and we have
\begin{equation}\label{eq:uv-frob}
	u^q=u+1,\qquad v^q=v+u,\qquad
	u^{q^2}=u-1,\qquad v^{q^2}=v-u+1.
\end{equation}
Since 
$
\sigma^3=I,
R=I+c\sigma,
L=I+c\sigma^2,
$ we have 
\[
(I+c\sigma)(I-c\sigma+c^2\sigma^2)
=(1+c^3)I=e^3I,
\]
so
\begin{equation}\label{eq:Rinv}
	R^{-1}(ea)
	=\frac{a-ca^q+c^2a^{q^2}}{e^2}.
\end{equation}
Similarly,
\begin{equation}\label{eq:Linv}
	L^{-1}(ea)
	=\frac{a-ca^{q^2}+c^2a^q}{e^2}.
\end{equation}
From \eqref{eq:uv-frob}--\eqref{eq:Linv}, we obtain
\begin{equation}\label{eq:Linvlinv}
	R^{-1}(eu)=u-r,
	\quad
	L^{-1}(eu)=u+r,
\quad
	R^{-1}(ev)=v-ru+r^2,
	\quad
	L^{-1}(ev)=v+ru-r+r^2.
\end{equation}
Substituting~\eqref{eq:Linvlinv} into $u\circ v=e^{-1}\bigl(R^{-1}(eu)\star L^{-1}(ev)\bigr)$ and simplifying, we obtain
\[
u\circ v-v\circ u=\kappa:=\frac{c(1-c)}{(1+c)^3}\in F.
\]
Since $c\in F^*\setminus\{1,-1\}$,  we have $u\circ v-v\circ u\in F^*.$

Now  we get  from Proposition \ref{prop-keyequ} that there exists  a semifield code \(\mathcal{S}\) such that
$
\mathcal{S}\subseteq \mathcal L_{\Delta,t}^{\perp}.
$
Finally, it follows  from Proposition \ref{prop:mrd-ferrers-lift} that 
there exists an optimal
$
[[3,3,3,5],6,3]_{3^m}
$
Ferrers diagram rank-metric code. 
Consequently, we have proved  the following theorem. 
\begin{theorem}\label{thm:char3-general}
	For every $m\ge2$, there exists an optimal
	\[
	[[3,3,3,5],6,3]_{3^m}
	\]
	Ferrers diagram rank-metric code.
\end{theorem}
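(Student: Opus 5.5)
The plan follows the same chain used for Theorem~\ref{thm:charno3-general}, with the basis adapted to characteristic $3$. Since $q=3^m$ is odd, first fix a nonsquare $\Delta\in F$. Theorem~\ref{thm:odd-core} then gives the $[3\times4,4,3]_q$ MRD code $\mathcal K_{\Delta,t}=\{[A_{\Delta,t}\mid w]\}$ for every $t\in F$. By Proposition~\ref{prop:mrd-ferrers-lift}, it suffices to exhibit, for a suitable $t$, a $[3\times3,3,3]_q$ MRD code contained in $\mathcal L_{\Delta,t}^\perp$. By Proposition~\ref{prop-keyequ}, this reduces to producing an $F$-semifield $(K,\circ)$ and an $F$-basis $(1,u,v)$ of $K$ with $u\circ v-v\circ u\in F^*$.

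For the semifield, we need $c\in F^*\setminus\{1,-1\}$ with $c^3\ne-1$. In characteristic $3$ we have $c^3+1=(c+1)^3$, so $c^3\ne -1$ is equivalent to $c\ne -1$. Because $m\ge2$, $|F^*|\ge 8$, so such a $c$ exists. This is where $q=3$ is excluded, since there $F^*=\{\pm1\}$. Proposition~\ref{bilinear2} then yields the semifield $(K,\circ)$ with identity $1$. Trace-zero elements cannot supply the basis here, because $1\in W$. Instead, use the nilpotent operator $\sigma-I$: since $(\sigma-I)^3=\sigma^3-I=0$ and $(\sigma-I)^2=\Tr_{K/F}$ is surjective, pick $v$ with $\Tr_{K/F}(v)=1$ and set $u=(\sigma-I)v$. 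Successively applying $(\sigma-I)^2$ and $\sigma-I$ to a relation $a+bu+dv=0$ shows that $(1,u,v)$ is a basis. It also gives the Frobenius relations $u^q=u+1$ and $v^q=v+u$, from which $u^{q^2}$ and $v^{q^2}$ follow.

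Next, compute $R^{-1}$ and $L^{-1}$ explicitly. From $R=I+c\sigma$ and $\sigma^3=I$ we get $(I+c\sigma)(I-c\sigma+c^2\sigma^2)=(1+c^3)I=e^3I$, so $R^{-1}(ea)=e^{-2}(a-ca^q+c^2a^{q^2})$, and similarly for $L$. Evaluating these on $u$ and $v$ with the Frobenius relations should give the affine expressions $R^{-1}(eu)=u-r$, $L^{-1}(eu)=u+r$, and so on, with $r=c/e$. Then expand $u\circ v-v\circ u=e^{-1}\bigl(R^{-1}(eu)\star L^{-1}(ev)-R^{-1}(ev)\star L^{-1}(eu)\bigr)$ using $a\star b=ab+ca^qb^{q^2}$. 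The $uv$-type terms and their Frobenius twists should cancel, leaving a constant; the expected value is $\kappa=c(1-c)/(1+c)^3$, which is nonzero because $c\ne0,1$.

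The main obstacle is this last expansion. Several quadratic terms ($uv$, $u^2$, $u^qv^{q^2}$, \dots) must be reduced with the Frobenius relations and shown to cancel in the commutator. I would organize it by writing each factor as an element of the $F$-span of $1,u,v$, then applying bilinearity of $\star$ and antisymmetrizing, so that only the $F$-scalar coefficients need tracking. With $\kappa\in F^*$ in hand, Proposition~\ref{prop-keyequ} produces $t$ and the semifield code $\mathcal S\subseteq\mathcal L_{\Delta,t}^\perp$. Proposition~\ref{prop:mrd-ferrers-lift} then gives the optimal $[[3,3,3,5],6,3]_{3^m}$ code.
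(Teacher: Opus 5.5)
Your proposal is correct and follows the paper's proof essentially step for step. It uses the same basis $(1,u,v)$ with $\Tr_{K/F}(v)=1$ and $u=(\sigma-I)v$, the same inverses $R^{-1}(ea)=e^{-2}(a-ca^q+c^2a^{q^2})$ (and the analogue for $L$), the same constant commutator $\kappa=c(1-c)/(1+c)^3$, and the same reduction through Propositions~\ref{prop-keyequ} and~\ref{prop:mrd-ferrers-lift}. The expansion you flag as the main obstacle does close up: after antisymmetrizing, the $uv$ terms cancel identically, and the coefficients of $u^2$, $u$ and $v$ vanish because $c(1-r)=r$ and $r-r^2=c/e^2$, leaving exactly $\kappa$.
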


We now illustrate Theorem \ref{thm:char3-general} with an example.
\begin{example}[An explicit code over $\F_{9}$]\label{ex:q9}
Let $F=\F_3(\alpha)$ with $\alpha^2+1=0$, and let $K=F(\beta)$ with $\beta^3+\beta+\alpha=0$. The displayed cubic has no root in $F$, so it is irreducible. Choose
\[c=\alpha,\qquad u={\alpha}\beta,\qquad v={\alpha}\beta+\beta^2,\qquad \lambda=u\circ v-v\circ u=\alpha.\]
Here $u^9-u=1$ and $v=u-u^2$ satisfies $v^9-v=u$. Thus $(1,u,v)$ forms an $F$-basis of $K$, as required in characteristic $3$. 
Take the nonsquare $\Delta=\alpha+1$.  Following the procedure in Example \ref{ex:q7}, we obtain two desired bases:
$$\begin{array}{lcl}\mathbf p&=&\big(1,2\alpha+\alpha\beta, \alpha+(\alpha+1)\beta+(2\alpha+1)\beta^2\big)\\
	\mathbf q&=&\big(1,\alpha+(2\alpha+1)\beta+\beta^2, 2\alpha+(2\alpha+1)\beta+(\alpha+1)\beta^2\big)
	\end{array}$$
The resulting code is $\mathcal C_{9}=\{N_{9}(R,S,C,E,a,b)\mid R,S,C,E,a,b\in F\}$, where
\begin{equation*}N_{9}(R,S,C,E,a,b)=\begin{pmatrix}R\!+\!{2}a\!+\!(2\alpha\!+\!2)b&{2\alpha}C\!+\!(\alpha\!+\!1)E\!+\!{2}a\!+\!{2\alpha}b&(2\alpha\!+\!2)C\!+\!(\alpha\!+\!2)E\!+\!b&(\alpha\!+\!1)S\\C&R\!+\!a&(\alpha\!+\!1)S\!+\!b&(2\alpha\!+\!2)E\\E&S&R&{2}C\\0&0&0&a\\0&0&0&b\end{pmatrix}.
\end{equation*}
A SageMath computation shows that the rank distribution of $\mathcal C_9$ is
\[
(A_0,A_1,A_2,A_3,A_4)=(1,0,0,72080,459360),
\]
\end{example}

\begin{remark}\label{rem:3-2} The proof strategy of Theorem \ref{thm:char3-general} is not applicable to \(q=3\), since \(\mathbb F_3=\{0,1,-1\}\), whereas the theorem requires some \(c\in F^*\setminus\{1,-1\}\). For this case, we find an optimal \([[3,3,3,5],6,3]_3\) Ferrers diagram rank-metric code  $\mathcal C_3$ by computer search, which is listed as follows:
	$$\mathcal C_3=\{M_3(a,b,c,d,e,f)\mid a,b,c,d,e,f\in \mathbb F_3\},$$ 
	 where
\[
M_3(a,b,c,d,e,f)=
\begin{pmatrix}
	b\!-\!e\!-\!f&c\!+\!e&\!-\!a\!+\!d&a\\
	a\!-\!e&d\!+\!f&b\!-\!c&\!-\!b\\
	d\!-\!e\!-\!f&b&\!-\!a\!-\!c&c\\
	0&0&0&e\\
	0&0&0&f
\end{pmatrix}.
 \] The rank distribution of $\mathcal C_3$ is
\[
(A_0,A_1,A_2,A_3,A_4)=(1,0,0,392,336),
\]
\end{remark}

\section{Proof of the main results when \texorpdfstring{$q$}{q} is even}\label{sec:even}
In this section, we prove that there exists an optimal \([[3,3,3,5],6,3]_q\) Ferrers diagram rank-metric code when $q$ is even.
While the construction of \([3\times 4,4,3]_q\) MRD codes in Theorem \ref{thm:odd-core} works for odd \(q\), it may fail to produce MRD codes in characteristic \(2\), as every element is a square in this case. Therefore, we need to find an alternative approach.

Throughout this section, \(q\) is assumed to be even.

\subsection{Construction of full row rank MRDs}
 Starting from the MRD codes obtained for odd \(q\) in Theorem \ref{thm:odd-core}, we try several modifications of the parameters in the construction and eventually obtain a new \([3\times 4,4,3]_q\) MRD code for even $q$.

\begin{theorem}\label{thm:char2-core} Let $t\in F$ and $n\in F $ such that \(\Tr_{F/\mathbb F_2}(n)=1\). 
For
$(R,S,C,E)\in F^4$, set 
	\begin{equation*}\label{eq:char2-core}
	\tilde M_{n,t}(R,S,C,E)=
		\begin{pmatrix}
			R&tC+nE&(t+n)C+ntE&nS\\
			C&R&nS&C+nE\\
			E&S&R+S&C
		\end{pmatrix}.
	\end{equation*}
Then	the matrix space \(\tilde {\cK}_{n,t}=\{\tilde M_{n,t}(R,S,C,E)\mid R,S,C,E\in F\}\) is a \([3\times4,4,3]_q\) MRD code.
\end{theorem}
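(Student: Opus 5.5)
The plan is to apply Proposition~\ref{prop:full-row-rank} with $m=3$, exactly as for Theorem~\ref{thm:odd-core}. Injectivity of $\tilde M_{n,t}$ is immediate, since $R,C,E,S$ appear unmixed in the entries $(1,1),(2,1),(3,1),(3,2)$. It therefore suffices to show that the associated $4\times4$ matrix $H(a,b,c)$ is invertible for every nonzero $(a,b,c)\in F^3$.

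Computing $(a,b,c)\tilde M_{n,t}(R,S,C,E)$ column by column and reading off the coefficients of $R,S,C,E$ gives
\[
H(a,b,c)=\begin{pmatrix}
a&0&b&c\\
b&c&at&an\\
c&bn+c&a(t+n)&ant\\
0&an&b+c&bn
\end{pmatrix}.
\]
The key arithmetic input is that $\Tr_{F/\mathbb F_2}(n)=1$ is equivalent to $X^2+X+n$ having no root in $F$. Equivalently, the binary quadratic form $Q(x,y)=x^2+xy+ny^2$ is anisotropic over $F$: $Q(x,y)=0$ forces $x=y=0$, because it is the norm form of $\mathbb F_{q^2}/F$ in a suitable basis. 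This plays the role that the nonsquare $\Delta$ played in odd characteristic.

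I would split into two cases.

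\emph{Case $a=0$.} The matrix becomes block anti-diagonal after permuting rows, and
\[
\det H=\pm\det\begin{pmatrix}b&c\\c&bn+c\end{pmatrix}\det\begin{pmatrix}b&c\\b+c&bn\end{pmatrix}=(c^2+bc+nb^2)^2=Q(c,b)^2,
\]
using characteristic $2$. This is nonzero for $(b,c)\neq(0,0)$.

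\emph{Case $a\neq0$.} I would expand $\det H$ symbolically, working modulo $2$, and aim to write it in one of two shapes:
\[
Q\bigl(P_1(a,b,c),P_2(a,b,c)\bigr)\quad\text{or}\quad Q(P_1,P_2)^2+a^2\,Q(P_3,P_4),
\]
where the $P_i$ are forms with $(P_1,P_2)\neq(0,0)$ whenever $a\neq0$. A natural first guess, mirroring $f=\Delta^3x^4-[z^2+\Delta(tx^2-y^2)]^2$, is
\[
\det H=Q\bigl(c^2+bc+nb^2+\ell(a,b,c)\,a,\ a^2\bigr)
\]
for some linear form $\ell$ depending on $t,n$. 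In that shape the second argument $a^2$ is nonzero, so anisotropy of $Q$ finishes the proof. To find the correct grouping, I would first compare coefficients of $a^4$, $a^3$, $a^2$ and $a$ in the expansion. Dividing by $a^4$ and setting $\beta=b/a$, $\gamma=c/a$ then reduces the question to showing that a quartic in $(\beta,\gamma)$ has the form $Q(g(\beta,\gamma),1)$ up to a nonzero constant, and $Q(g,1)\neq0$ holds automatically.

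The main obstacle is this expansion and regrouping step. It is routine but delicate, and $t$ must drop out or enter harmlessly. If a clean $Q$-form does not appear, my fallback is to view $Q$ as $N_{\mathbb F_{q^2}/F}$ and factor $\det H$ over $\mathbb F_{q^2}=F(\omega)$ with $\omega^2+\omega+n=0$. I would look for an eigenvector-type factorization $\det H=N(\theta)\cdot N(\theta')$, or $N(\theta)^2$, where $\theta$ is an explicit $\mathbb F_{q^2}$-linear combination of $a,b,c$ and their products. I would then argue that $\theta=0$ forces $a=b=c=0$.

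Once invertibility of $H$ is established, Proposition~\ref{prop:full-row-rank} shows that $\tilde{\cK}_{n,t}$ is a $[3\times4,4,3]_q$ MRD code.
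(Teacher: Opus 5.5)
Your framework is the paper's: Proposition~\ref{prop:full-row-rank}, the same matrix $H$, and anisotropy of the norm form of $\mathbb E=F(\rho)$ with $\rho^2+\rho+n=0$. Your $a=0$ case is computed correctly. The gap is that the proposal stops exactly where the content of the theorem lies. For $a\neq0$ you never evaluate $\det H$; you only guess its shape.

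The specific guess $\det H=Q(c^2+bc+nb^2+\ell a,\,a^2)$, with $\ell$ a linear form, is false unless $n=1$. Write $U=c^2+bc+nb^2$. Matching the guess to $\det H$ would force
\[
\ell^2+a\ell=(n+1)U+\kappa a^2
\]
for some constant $\kappa\in F$. In characteristic $2$ the left side has no $bc$-term, while the right side has $bc$-coefficient $n+1$. The fallback (looking for a factorization $N(\theta)N(\theta')$ or $N(\theta)^2$) is only a search strategy, not an argument. As written, nothing excludes a zero of $\det H$ with $a\neq0$.

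The missing ingredient is the explicit identity, which a cofactor expansion along the first row of $H$ confirms:
\[
\det H(a,b,c)=U^2+na^2U+n^2(t^2+t+n)a^4=Q\bigl(U+nta^2,\;na^2\bigr)=N_{\mathbb E/F}\bigl(U+na^2(t+\rho)\bigr).
\]
Here $Q(x,y)=x^2+xy+ny^2=N_{\mathbb E/F}(x+y\rho)$. So the correct second argument is $na^2$, not $a^2$. Your dehomogenized target is achievable: $\det H/a^4=n^2\,Q(g,1)$ with $g=t+(\gamma^2+\beta\gamma+n\beta^2)/n$.

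With this identity the proof closes in one line, and no case split is needed. If $\det H=0$, then $U+nta^2+na^2\rho=0$ in $\mathbb E$. Since $1,\rho$ are $F$-independent, $na^2=0$, so $a=0$. Then $U=N_{\mathbb E/F}(c+b\rho)=0$, which gives $b=c=0$. This is exactly the paper's argument.
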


\begin{proof} Obviously,  \(\dim \tilde{\mathcal{K}}_{n,t}=4\). 
	Let \(\mathbb E=F(\rho)\), where \(\rho^2+\rho+n=0\).  Since \(\Tr_{F/\mathbb F_2}(n)=1\), we get that $\rho \notin F$. Then $X^2+X+n$ is irreducible over $F$.
	Let  \((a,b,d)\in F^3\) such that $(a,b,d)\ \tilde M_{n,t}(R,S,C,E)=0$. Then we get the linear system 
	\begin{equation*}\label{eq:char2-H}
		H(a,b,d)\begin{pmatrix}
			R\\S\\C\\E
		\end{pmatrix}=\begin{pmatrix}
			a&0&b&d\\
			b&d&at&an\\
			d&nb+d&a(t+n)&ant\\
			0&an&b+d&bn
		\end{pmatrix}\begin{pmatrix}
		R\\S\\C\\E
	\end{pmatrix}=\begin{pmatrix}
	0\\0\\0\\0
\end{pmatrix}.
	\end{equation*}
Since $\rho+1$ is another root of $X^2+X+n$, we get that $\rho^q=\rho+1$. Set 	$U=N_{\mathbb E/F}(d+b\rho)$. Then we get 
	$$
	U=(d+b\rho)(d+b\rho)^q=(d+b\rho)(d+b(\rho+1))=d^2+bd+nb^2.
	$$
	Direct expansion using \(\rho^2+\rho+n=0\) gives
	\begin{equation*}\label{eq:char2-det}
		\det H(a,b,d) =U^2+a^2nU+a^4n^2(t^2+t+n)
		=N_{\mathbb E/F}\bigl(U+a^2n(t+\rho)\bigr).
	\end{equation*}

Now suppose that $\det H(a,b,d)=0$.	Then $U+a^2n(t+\rho)=0.$  Since $\rho\notin F$, we get  \(a=U=0\).  It follows that  \(d+b\rho=0\), which implies that $b=d=0$. Consequently, if 
$(a,b,d)\ne (0,0,0)$, then $\det H(a,b,d)\ne0$, which implies $R=S=C=E=0$. It follows that each nonzero codeword in $\tilde{\cK}_{n,t}$ has full row rank.
 Consequently,  $\tilde{\cK}_{n,t}$ is a \([3\times4,4,3]_q\) MRD code. \end{proof}

\subsection{The Orthogonality Conditions}

To perform the puncturing-inclusion construction, we write
\(\tilde M_{n,t}=[\tilde{A}_{n,t}\mid \tilde{w}]\), where
$$	\tilde{A}_{n,t}=\begin{pmatrix}
		R&tC+nE&(t+n) C+nt E\\
		C&R&n S\\ E&S&R+S
	\end{pmatrix},\qquad
	\tilde{w}=\begin{pmatrix}nS\\C+nE\\C\end{pmatrix}.
$$
Set  \(\tilde{\mathcal L}_{n,t}=\{\tilde{A}_{n,t}\mid R,S,C,E\in F\}\). 

Consider the  cubic extension \(K/F\) and let \((K,\circ)\) be a semifield.
Let \(\mathbf p=(p_1,p_2,p_3)\) and \(\mathbf q=(q_1,q_2,q_3)\) be two bases of $K$ over $F$.  Set 
\[
B_z=\Bigl(\Tr_{K/F}\big(z(p_i\circ q_j)\big)\Bigr)_{i,j},\qquad
\mathcal S=\{B_z:z\in K\}.
\]

 Set $\mathbf p=(1,x,y), q_1=1$.  Similar to the case that $q$ is odd, we get that \(\cS\subseteq \tilde{\mathcal L}_{n,t}^\perp\) if and only if 
\begin{align}
	1+x\circ q_2+y\circ q_3&=0,\label{eq:c2-R}\\
	nx\circ q_3+y\circ q_2+y\circ q_3&=0,\label{eq:c2-S}\\
	tq_2+(t+n)q_3+x&=0,\label{eq:c2-C}\\
	nq_2+ntq_3+y&=0.\label{eq:c2-E}
\end{align}
Let \(\delta=t^2+t+n\). Irreducibility of $X^2+X+n$ gives \(\delta\ne0\). Equations  \ref{eq:c2-C} and \ref {eq:c2-E} have the unique solution
\begin{equation}\label{eq:c2-r23}
	q_2=\frac{ntx+(t+n)y}{n\delta},\qquad
	q_3=\frac{nx+ty}{n\delta}.
\end{equation}
The coefficient matrix of $(x,y)\mapsto(q_2,q_3)$ has determinant $1/(n\delta)\ne0$. Consequently, whenever $(1,x,y)$ is an $F$-basis of $K$, so is $(1,q_2,q_3)$.
Substitute \ref{eq:c2-r23}
into \eqref{eq:c2-R} and \eqref{eq:c2-S}. Using the bilinearity of $\circ$, we obtain
\begin{equation}
	n\delta
	+nt(x\circ x)
	+(t+n)(x\circ y)
	+n(y\circ x)
	+t(y\circ y)
	=0,
	\label{4.22}
\end{equation}
and
\begin{equation}
	n(x\circ x)
	+t(x\circ y)
	+(t+1)(y\circ x)
	+y\circ y
	=0.
	\label{4.23}
\end{equation}
Multiplying \eqref{4.23} by $t$ and adding it to \eqref{4.22}, using $\operatorname{char}F=2$, we obtain
\[
n\delta+\delta(x\circ y)+\delta(y\circ x)=0.
\]
Since $\delta\neq0$, we get 
\begin{equation}
	x\circ y+y\circ x=n.
	\label{4.26}
\end{equation}
Substituting (\ref{4.26}) into \eqref{4.23}, we obtain
\begin{equation}y\circ y+n(x\circ x)+y\circ x=tn.
\end{equation}

Consequently, we have proved the following result.

\begin{proposition}\label{prop:char2-solver}
Let $(K,\circ)$ be a semifield, $t\in F$ and $n\in F $ such that \(\Tr_{F/\mathbb F_2}(n)=1\).  If there exists a basis $(1,x,y)$ of  $K$ over $F$ such that 
$$	x\circ y+y\circ x=n\quad \mbox{and} \quad 
	y\circ y+n(x\circ x)+y\circ x=tn,
	$$
	then 	there exists a semifield code $\mathcal S $  such that $
	\mathcal S\subseteq \tilde{\mathcal L}_{n,t}^{\perp}
	$.
\end{proposition}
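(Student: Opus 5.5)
The plan is to reverse the derivation that precedes Proposition~\ref{prop:char2-solver}. Given a basis $(1,x,y)$ of $K$ satisfying the two displayed equations, we set $p_1=1$, $p_2=x$, $p_3=y$ and $q_1=1$, and we define $q_2,q_3$ by \eqref{eq:c2-r23}. This is legitimate because $\delta=t^2+t+n\neq0$: the polynomial $X^2+X+n$ is irreducible since $\Tr_{F/\mathbb F_2}(n)=1$, so $t$ is not a root of it. The coefficient matrix of $(x,y)\mapsto(q_2,q_3)$ is invertible, with determinant $1/(n\delta)$. Hence $q_2,q_3$ lie in the $F$-span of $x,y$ with invertible transition matrix, and $\mathbf q=(1,q_2,q_3)$ is again an $F$-basis of $K$. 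Lemma~\ref{lem:trace-spread} then says that $\mathcal S=\{B_z\mid z\in K\}$ is a $[3\times3,3,3]_q$ semifield code.

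It remains to verify $\mathcal S\subseteq\tilde{\mathcal L}_{n,t}^\perp$. As in the odd case, pair $\tilde A_{n,t}$ with $B_z$ term by term. The result is $\Tr_{K/F}\bigl(z\,\Phi\bigr)$, where
\[
\Phi=R\,\Phi_R+S\,\Phi_S+C\,\Phi_C+E\,\Phi_E ,
\]
and each $\Phi_\bullet$ is a fixed $F$-linear combination of the products $p_i\circ q_j$. Reading off the entries of $\tilde A_{n,t}$ gives
\begin{align*}
\Phi_R&=P_{11}+P_{22}+P_{33}, & \Phi_S&=nP_{23}+P_{32}+P_{33},\\
\Phi_C&=tP_{12}+(t+n)P_{13}+P_{21}, & \Phi_E&=nP_{12}+ntP_{13}+P_{31},
\end{align*}
where $P_{ij}=p_i\circ q_j$. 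By nondegeneracy of the trace form, $\mathcal S\subseteq\tilde{\mathcal L}_{n,t}^\perp$ holds exactly when these four elements vanish. Because $1$ is the two-sided identity of $(K,\circ)$, these are precisely \eqref{eq:c2-R}--\eqref{eq:c2-E}.

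Equations \eqref{eq:c2-C} and \eqref{eq:c2-E} hold by the choice of $q_2,q_3$. This is a direct substitution: $tq_2+(t+n)q_3=x$ and $nq_2+ntq_3=y$, using characteristic $2$ and $\delta=t^2+t+n$. Substituting \eqref{eq:c2-r23} into \eqref{eq:c2-R} and \eqref{eq:c2-S} and using bilinearity gives, after multiplying by $n\delta$, exactly \eqref{4.22} and \eqref{4.23}. So it suffices to show that the hypotheses imply \eqref{4.22} and \eqref{4.23}.

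For \eqref{4.23}, substitute $t(x\circ y)=t(y\circ x)+tn$ from the first hypothesis. Equation \eqref{4.23} then becomes
\[
y\circ y+n(x\circ x)+y\circ x+tn=0,
\]
which is the second hypothesis. For \eqref{4.22}, note that $\eqref{4.22}+t\cdot\eqref{4.23}$ reduces to $\delta\bigl(n+x\circ y+y\circ x\bigr)$, and this is zero by the first hypothesis. Since \eqref{4.23} already holds, \eqref{4.22} follows.

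The main obstacle is bookkeeping rather than any conceptual step: one must check the characteristic-$2$ expansion of \eqref{4.22}, in particular the coefficient $n\delta$ and the cancellation giving $\delta(x\circ y+y\circ x+n)$. This requires tracking $t^2$ terms carefully, using $t^2+t=\delta+n$.
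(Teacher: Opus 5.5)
Your proposal is correct and follows essentially the paper's route: the same pairing yields \eqref{eq:c2-R}--\eqref{eq:c2-E}, the same solution \eqref{eq:c2-r23} handles \eqref{eq:c2-C}--\eqref{eq:c2-E}, and the same combination \eqref{4.22}$\,+\,t\cdot$\eqref{4.23}$\,=\delta(n+x\circ y+y\circ x)$ does the rest. The only difference is that you explicitly run the derivation in the converse direction, which is the direction the proposition actually asserts and which the paper leaves implicit. One minor point: multiplying \eqref{eq:c2-S} by $n\delta$ gives $n$ times \eqref{4.23} rather than \eqref{4.23} itself, which is harmless since $\Tr_{F/\mathbb F_2}(n)=1$ forces $n\neq0$.
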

\subsection{Proof of the main result for \texorpdfstring{$\operatorname{char}(F)=2$}{char(F)=2}}

Write $q=2^m$ with $m\ge3$, choose $n\in F$ with $\Tr_{F/\mathbb F_2}(n)=1$, and take \(c\in F^*\) satisfying \(c^3\ne1\). Such a \(c\) exists since $q-1>3$, so we can define a semifield $(K,\circ)$ as in Proposition \ref{bilinear2}.
 In characteristic \(2\), \(\Tr_{K/F}(1)=1\), so \(K=F\oplus W\), where $W=\mathrm{ker}\ \mathrm{Tr}_{K/F}$.  Let  $(u,v)$ be an $F$-basis of $W$. Then  $(1,u,v)$ is an $F$-basis of $K$.
It follows  from Proposition \ref{prop:trace-zero-master} and Lemma \ref{lem:trace-zero-master} that 
 $$\textstyle{u\circ v-v\circ u
=
\frac{3c(1-c^2)}{(c^2-c+1)^2}
\bigl(uv^q-u^qv\bigr)=\frac{c(1+c^2)}{(c^2+c+1)^2}
\bigl(uv^q-u^qv\bigr)\in F^*.}$$
Set $\eta=\frac{c(1+c^2)}{(c^2+c+1)^2}
\bigl(uv^q-u^qv\bigr),$ and $v_0=\frac{n}{\eta}v$. Then 
$u\circ v_0-v_0\circ u=n.$
Uniquely write
\begin{equation*}\label{eq:char2-Z}
	v_0\circ v_0+n(u\circ u)+v_0\circ u=z_0+z_1u+z_2v_0,\qquad z_i\in F.
\end{equation*}
Let
$
	x=u+z_2,\ y=v_0+z_1.
$
One can check that  $x\circ y+y\circ x=n$ and 
\[
y\circ y+n(x\circ x)+y\circ x
=z_0+z_1^2+nz_2^2+z_1z_2\in F.
\]
Setting 
$
	t=\frac{z_0+z_1^2+nz_2^2+z_1z_2}{n}
$
yields 	$y\circ y+n(x\circ x)+y\circ x=tn.$ 
It follows from Proposition \ref{prop:char2-solver}   that there exists a $[3\times 3,3,3]_q$ MRD code \(\cS\subseteq \tilde{\mathcal L}_{n,t}^\perp\). By Proposition \ref{prop:mrd-ferrers-lift}, there exists an optimal $[[3,3,3,5],6,3]_{q}$ Ferrers diagram rank-metric code. 

Consequently, we have proved the following theorem.

\begin{theorem}\label{thm:char2-general}
	For every $m\ge3$, there exists an optimal
	\[
	[[3,3,3,5],6,3]_{2^m}
	\]
	Ferrers diagram rank-metric code.
\end{theorem}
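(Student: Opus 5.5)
The argument runs along the pipeline of Propositions~\ref{prop:mrd-ferrers-lift} and \ref{prop:char2-solver}, specialised to characteristic $2$. First, fix $q=2^m$ with $m\ge 3$ and pick $n\in F$ with $\Tr_{F/\mathbb F_2}(n)=1$. Such an $n$ exists because the absolute trace is a surjective $\mathbb F_2$-linear map. Then Theorem~\ref{thm:char2-core} gives, for every $t\in F$, the $[3\times4,4,3]_q$ MRD code $\tilde{\mathcal K}_{n,t}=\{[\tilde A_{n,t}\mid\tilde w]\}$. The map $\tilde A_{n,t}\mapsto\tilde w$ is $F$-linear and $\dim\tilde{\mathcal L}_{n,t}=4$, since $R,S,C,E$ can be read off the entries $(1,1),(3,2),(2,1),(3,1)$. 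So the hypotheses of Proposition~\ref{prop:mrd-ferrers-lift} on $\mathcal L$ and $w$ hold, and it remains to find one $t$ together with a semifield code $\mathcal S\subseteq\tilde{\mathcal L}_{n,t}^\perp$.

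Next, build the semifield. Since $q-1\ge 7>3$, there is $c\in F^*$ with $c^3\neq 1$; in characteristic $2$ this is the same as $c^3\neq-1$, and it also gives $c\ne 1$. Proposition~\ref{bilinear2} then yields a semifield $(K,\circ)$ with identity $1$. Because $\Tr_{K/F}(1)=3=1\neq0$, we have $K=F\oplus W$, so an $F$-basis $(u,v)$ of $W$ extends to the basis $(1,u,v)$ of $K$. By Proposition~\ref{prop:trace-zero-master}(2) and Lemma~\ref{lem:trace-zero-master}(2),
\[
u\circ v-v\circ u=\frac{c(1+c^2)}{(c^2+c+1)^2}(uv^q-u^qv)=:\eta .
\]
Here $uv^q-u^qv\in F^*$, and the prefactor is nonzero because $c\neq0,1$ and $c^2+c+1\neq0$. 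Hence $\eta\in F^*$. Rescaling $v_0=(n/\eta)v$ then gives $u\circ v_0+v_0\circ u=n$.

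Then solve the second equation of Proposition~\ref{prop:char2-solver} by translating $u$ and $v_0$ by scalars. Expand $v_0\circ v_0+n(u\circ u)+v_0\circ u=z_0+z_1u+z_2v_0$ in the basis $(1,u,v_0)$, and set $x=u+\alpha$, $y=v_0+\beta$ with $\alpha,\beta\in F$. Since $1$ is a two-sided identity and $\circ$ is bilinear, the commutator is unchanged: $x\circ y+y\circ x=n$. Using $2=0$, the other expression becomes
\[
y\circ y+n(x\circ x)+y\circ x=(z_0+\beta^2+n\alpha^2+\alpha\beta)+(z_1+\beta)\,u+(z_2+\alpha)\,v_0 .
\]
So we take $\beta=z_1$ and $\alpha=z_2$ to kill the $u$ and $v_0$ coefficients. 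The constant term is what remains, and we define $t$ by setting it equal to $tn$. The triple $(1,x,y)$ is still a basis, and by \eqref{eq:c2-r23} so is $(1,q_2,q_3)$. Proposition~\ref{prop:char2-solver} then gives $\mathcal S\subseteq\tilde{\mathcal L}_{n,t}^\perp$, and Proposition~\ref{prop:mrd-ferrers-lift} finishes the proof.

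The main obstacle is bookkeeping in this last translation step. One must check that in the expansion of $y\circ x$ the cross terms land on the correct basis vectors, and that $t$ is chosen only after $x,y$ are fixed. This order is legitimate because the MRD property in Theorem~\ref{thm:char2-core} holds for every $t\in F$. A secondary point is the nonvanishing of $\eta$, which is exactly where $m\ge3$ (through the existence of $c$) is used.
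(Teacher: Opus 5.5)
Your proposal is correct and follows the paper's proof essentially line by line. It uses the same choice of $c$ with $c^3\neq 1$, the same trace-zero basis $(1,u,v)$ whose commutator $\eta\in F^*$ comes from Proposition~\ref{prop:trace-zero-master} and Lemma~\ref{lem:trace-zero-master}, and the same rescaling $v_0=(n/\eta)v$. It then makes the same translations $x=u+z_2$, $y=v_0+z_1$, and reads off $t$ from the constant term $z_0+z_1^2+nz_2^2+z_1z_2$ before invoking Propositions~\ref{prop:char2-solver} and \ref{prop:mrd-ferrers-lift}. Your extra checks fill in details the paper leaves implicit: that $\tilde{\mathcal L}_{n,t}$ is $4$-dimensional with $\tilde w$ linear, and that choosing $t$ last is harmless because Theorem~\ref{thm:char2-core} holds for every $t$.
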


 We now illustrate Theorem \ref{thm:char2-general} with an example.
\begin{example}[An explicit code over $\F_{8}$]\label{ex:q8}
	
	Let $F=\F_2(\alpha)$ with $\alpha^3+\alpha+1=0$, and let $K=F(\beta)$ with $\beta^3+\beta+\alpha=0$. The displayed cubic has no root in $F$, so it is irreducible. Choose
	\[c=\alpha,\qquad u=\beta,\qquad v=\beta^2.\]
	Both $u$ and $v$ have trace zero, and $(1,u,v)$ is an $F$-basis of $K$. 
	Take $n=1$, for which $\Tr_{F/\F_2}(n)=1$. 
	% Set $t=\alpha^6.$ Then $\delta=t^2+t+n=\alpha^3.$
	Following the proof of Theorem \ref{thm:char2-general}, we obtain two desired bases:
	$$\mathbf p=(1,\alpha^{4}+\beta,\alpha^{4}+{\alpha^{2}}\beta^2),\quad 
	\mathbf q=(1,\alpha+{\alpha^{3}}\beta+{\alpha}\beta^2,\alpha^{3}+{\alpha^{4}}\beta+{\alpha^{5}}\beta^2)$$
	The resulting code is $\mathcal C_{8}=\{N_{8}(R,S,C,E,a,b)\mid R,S,C,E,a,b\in F\}$, where\begin{equation*}N_{8}(R,S,C,E,a,b)=\begin{pmatrix}R\!+\!{\alpha^{5}}a\!+\!b&{\alpha^{6}}C\!+\!E\!+\!b&{\alpha^{2}}C\!+\!{\alpha^{6}}E\!+\!{\alpha^{5}}a&S\\C&R&S\!+\!a&C\!+\!E\\E&S\!+\!b&R\!+\!S&C\\0&0&0&a\\0&0&0&b\end{pmatrix}.\end{equation*}
	A SageMath computation shows that the rank distribution of $\mathcal C_8$ is
	\[
	(A_0,A_1,A_2,A_3,A_4)=(1,0,0,40887,221256),
	\]

\end{example}
\begin{remark}\label{rem:4-1}The proof of Theorem \ref{thm:char2-general} requires the existence of some \(c\in F^*\) such that \(c^3\neq 1\). This condition fails for \(q=2,4\), so the proof strategy used there is not applicable to these two cases. For \(q=2,4\), we find optimal \([[3,3,3,5],6,3]_q\) Ferrers diagram rank-metric codes by computer search, which are listed as follows.
  
  \begin{itemize}
  \item[(1)] The code $$\mathcal C_2=\{M_2(a,b,c,d,e,f)\mid a,b,c,d,e,f\in \mathbb F_2\}$$ is an  optimal 	$[[3,3,3,5],6,3]_{2}$  Ferrers diagram rank-metric code, where
  $$M_2(a,b,c,d,e,f)=
  \begin{pmatrix}
  	a\!+\!b\!+\!d&d\!+\!e&b\!+\!e\!+\!f&a\!+\!b\!+\!c\!+\!d\\
  	a\!+\!b\!+\!c\!+\!f&a\!+\!b&a\!+\!e&a\!+\!d\\
  	b&c&d&a\\
  	0&0&0&e\\
  	0&0&0&f
  \end{pmatrix}.
  $$
  The rank distribution of $\mathcal C_2$ is
  \[
  (A_0,A_1,A_2,A_3,A_4)=(1,0,0,57,6),
  \]
Note that an optimal \([[3,3,3,5],6,3]_2\) Ferrers diagram rank-metric code was first given in \cite{Etzion2016}, which  is different from $\mathcal C_2$  given here.
\item[(2)]
Let
$
F=\F_2[\alpha]/(\alpha^2+\alpha+1).
$ Then the code $$\mathcal C_4=\{M_4(a,b,c,d,e,f)\mid a,b,c,d,e,f\in \mathbb F_4\}$$ is an  optimal 	$[[3,3,3,5],6,3]_{4}$  Ferrers diagram rank-metric code, where
\[
M_4(a,b,c,d,e,f)=
\begin{pmatrix}
	\alpha^2b\!+\!f&c\!+\!\alpha^2d&a\!+\!d\!+\!\alpha^2e&\alpha^2a\\
	a&b&c&a\!+\!b\!+\!\alpha^2d\\
	d&\alpha^2a\!+\!\alpha d\!+\!e&\alpha b\!+\!f&\alpha^2b\!+\!\alpha c\!+\!d\\
	0&0&0&e\\
	0&0&0&f
\end{pmatrix}.
\]   The rank distribution of $\mathcal C_4$ is
\[
(A_0,A_1,A_2,A_3,A_4)=(1,0,0,1515,2580),
\]
\end{itemize}
\end{remark}

\section{Applications of the main results and a binary construction}\label{sec:general}

\subsection{Applications of the main results}

Combining Theorems \ref{thm:charno3-general},\ref{thm:char3-general}, \ref{thm:char2-general} and Remarks \ref{rem:3-2},\ref{rem:4-1}, we get the following main theorem:

\begin{theorem}\label{thm:all}
	For every prime power $q$, there exists an optimal
	$[[3,3,3,5],6,3]_q$ Ferrers diagram rank-metric code.
\end{theorem}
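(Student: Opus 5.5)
Theorem~\ref{thm:all} is a case assembly: every prime power $q$ falls into exactly one of the already-treated regimes, so the proof consists of partitioning the prime powers and citing the matching earlier result. We split first by the characteristic $p$ of $F=\mathbb F_q$, and then isolate the small fields on which the semifield argument does not apply.

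If $p\notin\{2,3\}$, then $q$ is odd and $\operatorname{char}F\neq 3$, so Theorem~\ref{thm:charno3-general} applies directly. Its hypothesis is an element $c\in F^*\setminus\{1,-1\}$ with $c^3\neq-1$. Such a $c$ exists because $q\ge5$, the excluded values are $0,\pm1$, and there are at most three cube roots of $-1$. Since this argument was given inside the section, it needs only to be cited.

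If $p=3$, write $q=3^m$. For $m\ge2$ we invoke Theorem~\ref{thm:char3-general}, and for $m=1$ (that is, $q=3$) we use the explicit code $\mathcal C_3$ of Remark~\ref{rem:3-2}. If $p=2$, write $q=2^m$. For $m\ge3$ Theorem~\ref{thm:char2-general} applies, while $q=2$ and $q=4$ are handled by the codes $\mathcal C_2$ and $\mathcal C_4$ of Remark~\ref{rem:4-1}.

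The only point needing care is the three explicit codes. We must confirm that each matrix has zeros outside $[3,3,3,5]$, i.e.\ rows $4$ and $5$ vanish in the first three columns. We must also confirm that the code has dimension $6=\nu_{\min}([3,3,3,5],3)$ and minimum rank $3$. The first check is immediate from the displayed matrices. Dimension $6$ follows because the total count $\sum_r A_r$ equals $q^6$: for $q=2$, $1+57+6=64$; for $q=3$, $1+392+336=729$; for $q=4$, $1+1515+2580=4096$. Minimum rank $3$ follows from $A_1=A_2=0$ in the stated rank distributions, which are computer-verified. With these checks, the four cases $p\ge5$, $p=3$, $p=2$ and the small fields cover all prime powers, and the theorem follows.
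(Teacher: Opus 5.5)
Your proposal is correct and is the paper's own proof: the paper combines Theorems~\ref{thm:charno3-general}, \ref{thm:char3-general}, \ref{thm:char2-general} with the explicit codes of Remarks~\ref{rem:3-2} and~\ref{rem:4-1}, using exactly your case split, and your checks on $\mathcal C_2,\mathcal C_3,\mathcal C_4$ (support, $\sum_r A_r=q^6$, $A_1=A_2=0$) are accurate. One small slip in your aside on the existence of $c$: at $q=5$ your count does not suffice, since the values $0,\pm1$ together with up to two further roots of $x^2-x+1$ could a priori exhaust all of $\mathbb F_5$; you need the extra observation that $x^2-x+1$ has no root in $\mathbb F_5$ (so, e.g., $c=2$ works), although this does not affect the proof because Theorem~\ref{thm:charno3-general} is simply cited.
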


\begin{remark}
The optimal code in Theorem~\ref{thm:all} lies outside the scope of the following standard constructions.
\begin{itemize}
\item[(1)] \emph{Extension-field-linear MRD subcodes.}
For $\mathcal F_0=[3,3,3,5]$ and distance $3$, the ambient size is $5\times4$ and $\ell=4-3+1=2$. We have
\[
c_\ell=c_{\ell+1}=3<5,\qquad \gcd(3,5)=1,\qquad
\nu_{\min}(\mathcal F_0,3)=\nu_0(\mathcal F_0,3)=3+3=6.
\]
By the coprimality obstruction in~\cite[Corollary~4.2]{AntrobusGluesingLuerssen2019}, an optimal code on $\mathcal F_0$ cannot be realized as an $\mathbb F_q$-subspace of an $\mathbb F_{q^5}$-linear $[5\times4,10,3]_q$ MRD code under coordinate expansion. This excludes the systematic construction based on such an extension-field-linear MRD code.

\item[(2)] \emph{Monotone diagrams and diagonal MDS constructions.}
In the square convention of~\cite{NeriStanojkovski2024}, pad the diagram with an empty column to obtain $D=[0,3,3,3,5]$. Its repeated positive heights below $5$ violate the monotonicity condition in~\cite[Definition~2.14]{NeriStanojkovski2024}. Its adjoint $D^{\text{T}}=[1,1,4,4,4]$ is likewise not monotone. Moreover, the diagonal lengths in Definition~2.12 of that reference are $(3,4,3,2,1)$, so the diagonal MDS construction yields dimension
\[
\sum_{i=1}^5\max\{0,|D\cap\Delta_i^5|-3+1\}=1+2+1=4<6.
\]
Thus $(D,3)$ is not MDS-constructible, and these existence results do not yield an optimal code here.

\item[(3)] \emph{The standard equal-dimension combination construction.}
The obstruction in~\cite[Lemma~2.13]{LiuChangFeng2019} applies to the combination construction in Theorem~2.12 of that reference, which restates~\cite[Theorem~9]{Etzion2016}. With the zero-based heights $(\gamma_0,\gamma_1,\gamma_2,\gamma_3)=(3,3,3,5)$, we have
\[
\max_{2\le i\le3}(\gamma_i-\gamma_{i-1})=2<\nu_0=6=\nu_{\min}(\mathcal F_0,3).
\]
Hence this particular combination construction cannot produce an optimal code of dimension $6$ on $\mathcal F_0$. 
\end{itemize}
\end{remark}

As an application of Theorem \ref{thm:all},  we give a construction of  optimal Ferrers diagram rank-metric codes for an infinite family of diagrams by the combination construction given in \cite{Etzion2016}. 

\begin{theorem}\label{thm:general}
	Let $n\ge1$ and write $n=6t+r$, where $t\ge0$ and $0\le r\le5$. Define
	\begin{align*}
		\mathcal F_n=\big[&3,3,3,5,
		\underbrace{11,\cdots,11}_{6},
		\underbrace{17,\cdots,17}_{6},\cdots,
		\underbrace{5+6t,\cdots,5+6t}_{6},\underbrace{5+6(t+1),\cdots,5+6(t+1)}_{r}\big],
	\end{align*}

\medskip
\noindent 	where the blocks indexed by $1,\cdots,t$ are omitted when $t=0$, and the last block is omitted when $r=0$.
Then	for every prime power $q$, there exists an optimal
	$[\mathcal F_n,6,n+3]_q$ Ferrers diagram rank-metric code.
\end{theorem}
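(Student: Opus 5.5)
We prove the dimension bound directly and build the code by gluing blocks, inducting on the number of columns added to $[3,3,3,5]$.

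\emph{Upper bound.} The diagram $\mathcal F_n$ has $n+4$ columns and the prescribed distance is $\delta=n+3$. For $i=0$ we delete the rightmost $\delta-1=n+2$ columns. This leaves the first two columns, each of height $3$, so $\nu_0=6$. Hence $\nu_{\min}(\mathcal F_n,n+3)\le 6$, and it suffices to construct a $6$-dimensional $[\mathcal F_n,6,n+3]_q$ code.

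\emph{Construction.} Take the optimal $[[3,3,3,5],6,3]_q$ code $\mathcal C_0$ from Theorem~\ref{thm:all} and fix a linear isomorphism $\phi_0:F^6\to\mathcal C_0$.

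For each full block $j=1,\dots,t$, look at its $6$ columns of height $5+6j$ and restrict them to the $6$ new rows $6j,\dots,6j+5$ (indexing rows from $0$ at the top), i.e.\ those below height $5+6(j-1)$. These form a $6\times6$ square, and we place there a $[6\times6,6,6]_q$ MRD code, for example the matrix representation of $\mathbb F_{q^6}$ acting on itself. We parametrize it by an isomorphism $\phi_j:F^6\to\mathcal S_j$.

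For the last partial block of $r$ columns of height $5+6(t+1)$, we likewise use the $6\times r$ square of new rows. There we place a $[6\times r,6,r]_q$ MRD code, which exists by the Gabidulin construction since $6(r-r+1)=6$; call its parametrization $\phi_{t+1}$.

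The code is then
$$\mathcal C=\{M(X)\mid X\in F^6\},$$
where $M(X)$ has $\phi_0(X)$ in the first four columns (top $5$ rows) and $\phi_j(X)$ in the new rows of block $j$. All other entries are zero, in particular the entries of the block columns lying in rows above row $6j$. Every entry of $M(X)$ lies inside $\mathcal F_n$, and $\dim\mathcal C=6$.

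\emph{Distance.} Order the rows and columns by blocks. Each block's columns have no entries in the rows belonging to later blocks, because earlier columns are shorter. Conversely, the new rows of block $j$ meet only the columns of block $j$, since we put zeros elsewhere. Hence $M(X)$ is block upper triangular with diagonal blocks $\phi_0(X),\phi_1(X),\dots,\phi_{t+1}(X)$ in their respective row/column groups; the entries of block $j$'s columns lying in earlier rows are set to zero, although only triangularity matters.

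We then use the standard inequality
$$\operatorname{rank}\begin{pmatrix}A&*\\0&B\end{pmatrix}\ge\operatorname{rank}A+\operatorname{rank}B.$$
For $X\ne0$, every $\phi_j(X)$ is nonzero. So
$$\operatorname{rank}M(X)\ \ge\ 3+6t+r\ =\ n+3.$$
Therefore $\mathcal C$ is an optimal $[\mathcal F_n,6,n+3]_q$ code; this is essentially Etzion's combination construction in~\cite{Etzion2016}.

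\emph{Main obstacle.} The delicate step is the row bookkeeping. We must confirm that the rows of block $j$ are exactly the six rows lying strictly below the previous maximal height $5+6(j-1)$ and within $5+6j$. We must also confirm that, under the paper's bottom-aligned convention, the earlier columns genuinely vanish in those rows, so that the triangular structure, and hence the rank additivity, holds.
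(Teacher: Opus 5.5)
Your proof is correct and takes essentially the paper's approach: a block-diagonal combination of the seed $[[3,3,3,5],6,3]_q$ code with $t$ copies of a $[6\times 6,6,6]_q$ MRD code and a $[6\times r,6,r]_q$ MRD code, all driven by the same $X\in F^6$, so that the rank is at least $3+6t+r$. Two small remarks: using only $\nu_0=6$ together with the existence of your code to get $\nu_{\min}=6$ is a valid shortcut for the paper's full count of all $\nu_i$. There are also two harmless bookkeeping slips: with rows indexed from $0$ the new rows of block $j$ are $6j-1,\dots,6j+4$ (your range $6j,\dots,6j+5$ is the $1$-indexed one), and the paper's columns are top-aligned rather than bottom-aligned, which is the convention your verbal description actually uses.
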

\begin{proof}
A direct count gives $\nu_0=\nu_1=\nu_2=6$ and $\nu_i>6$ for $3\le i\le n+2$, so $\nu_{\min}(\mathcal F_n,n+3)=6$.
Let $\mathcal C_0$ be the six-dimensional seed code in Theorem~\ref{thm:all}, and let $\mathcal U$ be a $[6\times6,6,6]_q$ MRD code. If $1\le r\le5$, let $\mathcal V_r$ be a $[6\times r,6,r]_q$ MRD code. Such codes are obtained from multiplication in $\mathbb F_{q^6}$, using all six coordinate columns for $\mathcal U$ and the first $r$ columns for $\mathcal V_r$.
Choose $\mathbb F_q$-linear isomorphisms
\[
\Phi_0:\mathbb F_q^6\longrightarrow\mathcal C_0,\qquad
\Phi_{\mathrm{full}}:\mathbb F_q^6\longrightarrow\mathcal U,
\]
and, when $r>0$, $\Phi_{\mathrm{tail},r}:\mathbb F_q^6\longrightarrow\mathcal V_r$.
For $v\in\mathbb F_q^6$, set
\[
B_n(v)=\operatorname{diag}\bigl(\Phi_0(v),
\underbrace{\Phi_{\mathrm{full}}(v),\ldots,\Phi_{\mathrm{full}}(v)}_{t},
\Phi_{\mathrm{tail},r}(v)\bigr),
\]
%where the final block, including its six rows, is omitted when $r=0$.
Obviously, the support of $B_n(v)$ is contained in $\mathcal F_n$. Since its seed block is $\Phi_0(v)$, the encoding map is injective; hence the code $\mathcal C=\{B_n(v):v\in\mathbb F_q^6\}$ has dimension $6$.
Every nonzero codeword satisfies
\[
\rank B_n(v)=\rank\Phi_0(v)+6t+r\ge3+6t+r=n+3.
\]
Since $\Phi_0$ is onto and the seed code has minimum rank distance $3$, equality occurs. Therefore $\mathcal C$ is an optimal $[\mathcal F_n,6,n+3]_q$ Ferrers diagram rank-metric code.
\end{proof}

\subsection{A binary construction of  optimal Ferrers diagram rank-metric codes}

In this subsection, we construct optimal Ferrers diagram rank-metric codes for an infinite family of diagrams over \(\mathbb{F}_2\) directly by the puncturing-inclusion construction. This construction is independent of the proof of Theorem \ref{thm:all}.

\begin{theorem}\label{thm:binary}
		Let $m\ge3$, $1\le t\le m-1$ and $$
	\mathcal F_{m,t}=[\underbrace{m,m,\ldots,m}_{m\text{ columns}},m+t].$$

	\medskip
	\noindent Then there  exists an optimal $[\mathcal F_{m,t},m+t+1,m]_2$  Ferrers diagram rank-metric code.
\end{theorem}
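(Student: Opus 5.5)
The plan is to build the code directly, splitting off the part of each codeword that lives in the extra $t$ rows. First I would check the bound. The ambient size is $(m+t)\times(m+1)$ and $\delta=m$, so $\nu_i$ deletes the top $i$ rows and the rightmost $m-1-i$ columns. For $i=m-1$ every column is kept: each height-$m$ column keeps one dot and the last column keeps $t+1$ dots, so $\nu_{m-1}=m+t+1$. For $i\le m-2$ the remaining $i+2\le m$ columns all have height $m$, so $\nu_i=(i+2)(m-i)\ge 2m\ge m+t+1$. Hence $\nu_{\min}(\mathcal F_{m,t},m)=m+t+1$.

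Next I would fix the shape of the code. Every codeword is
$$\Gamma=\begin{pmatrix} A & w\\ O_{t\times m} & x\end{pmatrix},\qquad A\in\mathcal M_m(\mathbb F_2),\ w\in\mathbb F_2^m,\ x\in\mathbb F_2^t .$$
In the style of Remark~\ref{rem:mrd-ferrers-lift-correspondence}, I would take $A=B+\sum_i x_iB_i$ with $B\in\mathcal L$ and $w=w(B)$, where $\mathcal L$ is an $(m+1)$-dimensional space and $B_1,\dots,B_t$ are fixed matrices. The rank argument splits on $x$.
\begin{itemize}
\item If $x\neq0$, then $\rank\Gamma\ge\rank A+1$, because only the last column is nonzero in the bottom $t$ rows. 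So it suffices that $\mathcal L+\operatorname{span}\{B_i\}$ is a $2m$-dimensional-or-smaller subspace of an $[m\times m,2m,m-1]_2$ MRD code $\mathcal G$, with the $B_i$ independent modulo $\mathcal L$. Then $A\notin\mathcal L$ forces $A\neq0$, hence $\rank A\ge m-1$. This needs $(m+1)+t\le 2m$, which is exactly $t\le m-1$.
\item If $x=0$, then $\Gamma$ reduces to $[B\mid w(B)]$, and I need $\{[B\mid w(B)]\}$ to be an $[m\times(m+1),m+1,m]_2$ MRD code, i.e. every nonzero element has full row rank.
\end{itemize}
This is the puncturing--inclusion pattern of Theorem~\ref{thm:d3-mrd-correspondence}, but I will verify it by hand rather than invoke that theorem, since the shape here is different.

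For $\mathcal G$ I would take the Gabidulin code $\{a\mathrm x+b\mathrm x^2: a,b\in\mathbb F_{2^m}\}$ of $\mathbb F_2$-linear maps on $\mathbb F_{2^m}$, written as matrices in a fixed $\mathbb F_2$-basis. I would then set
$$\mathcal L=\{a\mathrm x+\varepsilon(\mathrm x^2+\mathrm x): a\in\mathbb F_{2^m},\ \varepsilon\in\mathbb F_2\},$$
which has dimension $m+1$. The $B_i$ are then any $t$ maps $b_i\mathrm x^2$ with $1,b_1,\dots,b_t$ linearly independent over $\mathbb F_2$.

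It remains to analyse the rank of elements of $\mathcal L$ and to choose $w$.
\begin{itemize}
\item If $\varepsilon=0$, the map $a\mathrm x$ is invertible whenever $a\ne0$, so full row rank holds automatically.
\item If $\varepsilon=1$, the map is $(a+1)\mathrm x+\mathrm x^2$. It is singular exactly when $a+1=x_0\neq0$, in which case its kernel is $\{0,x_0\}$. Its image is $x_0^2\cdot\{y^2+y\}=x_0^2 W_0$, where $W_0=\ker\Tr_{\mathbb F_{2^m}/\mathbb F_2}$.
\end{itemize}
Identifying $\mathbb F_2^m$ with $\mathbb F_{2^m}$, I would define $w\bigl(a\mathrm x+\varepsilon(\mathrm x^2+\mathrm x)\bigr)=\theta a^2+\varepsilon\theta$ with $\Tr(\theta)=1$; this map is $\mathbb F_2$-linear. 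In the singular case $a=x_0+1$, $\varepsilon=1$, we get $w=\theta(x_0^2+1)+\theta=\theta x_0^2$. Then $w/x_0^2=\theta\notin W_0$, so $w\notin\operatorname{im}$. Hence $[B\mid w(B)]$ has rank $(m-1)+1=m$, and every nonzero element of the $x=0$ part has full row rank.

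The main obstacle is exactly this choice of $w$. A constant correction $\varepsilon\gamma$ cannot work, because $\gamma/x_0^2$ runs over all of $\mathbb F_{2^m}^*$ as $x_0$ varies, so some value lands in $W_0$. The correction must instead depend quadratically on $a$, so that the factor $x_0^2$ cancels; this is why $w$ contains the term $\theta a^2$.

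Finally, the zero pattern matches $\mathcal F_{m,t}$: the first $m$ columns are supported in the top $m$ rows, and only the last column meets rows $m+1,\dots,m+t$. The dimension is $(m+1)+t$, and both cases give $\rank\Gamma\ge m$ for every nonzero codeword. Hence the code is an optimal $[\mathcal F_{m,t},m+t+1,m]_2$ Ferrers diagram rank-metric code.
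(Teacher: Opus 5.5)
Your proof is correct and is essentially the paper's construction in transposed form. The paper restricts the $\mathrm x$-coefficient to $V_t$, lets the $\mathrm x^2$-coefficient $b$ run freely, and checks that $y\mapsto(T_{a,b}(y),r_b(y))$ is injective via $r_b(b^{-1})=\Tr(\lambda)=1$; you instead restrict the $\mathrm x^2$-coefficient, let the $\mathrm x$-coefficient run freely, and check the dual image condition $w\notin x_0^2\ker\Tr$. (After setting $a'=a+\varepsilon$, your extra column is just $\theta a'^2$, and up to row/column operations preserving $\mathcal F_{m,t}$ your code is the trace-adjoint of the paper's code followed by Frobenius.) The bound computation, the case split on the bottom block, and the use of the rank bound $m-1$ for the Gabidulin code $\{a\mathrm x+b\mathrm x^2\}$ are the same as in the paper.
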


\begin{proof}
Fix an ordered $\mathbb{F}_{2}$-basis
$\mathcal B=(e_0,e_1,\cdots,e_{m-1})$ of $\mathbb{F}_{2^m}$ with $e_0=1$,
write $[x]_{\mathcal B}$ for the coordinate column of $x\in \mathbb{F}_{2^m}$.
Choose $\lambda\in \mathbb{F}_{2^m}$ such that $\Tr_{\mathbb{F}_{2^m}/\mathbb{F}_{2}}(\lambda)=1$.
Let $V_t=\operatorname{span}_{\mathbb{F}_{2}}\{e_0,e_1,\cdots,e_t\}$. For $a\in V_t$ and $b\in \mathbb{F}_{2^m}$, write uniquely
\[
a=a_0e_0+\cdots+a_te_t,\qquad a_i\in \mathbb{F}_{2},
\]
and define the $\mathbb{F}_{2}$-linear maps
\[
T_{a,b}(x)=ax+bx^2,\qquad r_b(x)=\Tr_{\mathbb{F}_{2^m}/\mathbb{F}_{2}}(\lambda bx).
\]
Let $M(a,b)$ be the matrix of $T_{a,b}$ in $\mathcal B$, and set
\[
\rho_b=\bigl(\Tr_{\mathbb{F}_{2^m}/\mathbb{F}_{2}}(\lambda be_0),\cdots,
\Tr_{\mathbb{F}_{2^m}/\mathbb{F}_{2}}(\lambda be_{m-1})\bigr),
\qquad h(a)=(a_1,\cdots,a_t)^T.
\]
Thus $M(a,b)[x]_{\mathcal B}=[T_{a,b}(x)]_{\mathcal B}$ and $\rho_b[x]_{\mathcal B}=r_b(x)$.
Define
\begin{equation*}\label{eq:binary-tail-encoding}
	C_t(a,b)=\begin{pmatrix}
		M(a,b)^T&\rho_b^T\\
		0_{t\times m}&h(a)
	\end{pmatrix},
	\qquad
	\mathcal C_{m,t}=\{C_t(a,b):a\in V_t,\ b\in \mathbb{F}_{2^m}\}.
\end{equation*}
We claim that $	\mathcal C_{m,t}$ is an  $[\mathcal F_{m,t},m+t+1,m]_2$  Ferrers diagram rank-metric code.
By construction, $	\mathcal C_{m,t}$ is supported on $\mathcal F_{m,t}$ and the map $f: (a,b)\mapsto C_t(a,b)$ is $\mathbb F_2$ -linear. 

{\it The dimension:} Let $a\in V_t,\ b\in \mathbb{F}_{2^m}$.
		If $C_t(a,b)=0$, then $T_{a,b}$ is identically zero. Substituting $x=1$ gives $a=b$. Since $m\ge3$, there exists $x\in \mathbb{F}_{2^m}\setminus \mathbb{F}_{2}$ such that 
	\[
	0=ax+bx^2=b(x+x^2).
	\]
	As $x\notin\{0,1\}$, we have $x+x^2\ne0$, so $b=a=0$. Thus $f$ is  injective and
therefore	\[
	\dim_{\mathbb{F}_{2}}\mathcal C_{m,t}=\dim_{\mathbb{F}_{2}} V_t+\dim_{\mathbb{F}_{2}} \mathbb{F}_{2^m}=m+t+1.
	\]

{\it The distance:}	For $(a,b)\ne(0,0)$, if $b=0$ then $T_{a,0}$ is invertible. If $b\ne0$, the factorization $T_{a,b}(x)=x(a+bx)$ shows that its kernel is contained in $\{0,a/b\}$, and hence has $\mathbb{F}_{2}$-dimension at most one. Consequently
	\begin{equation}\label{eq:binary-core-rank}
		\rank M(a,b)\ge m-1.
	\end{equation}
	If $h(a)\ne0$, the bottom part of the last column is nonzero, whereas the bottom parts of the first $m$ columns are zero. The last column is therefore independent of the first $m$ columns, and~\eqref{eq:binary-core-rank} gives
	\[
	\rank C_t(a,b)=\rank M(a,b)+1\ge m.
	\]
	If $h(a)=0$, then $a=a_0\in \mathbb{F}_{2}$. It suffices to show that $x\mapsto(T_{a,b}(x),r_b(x))$ is injective, since the top rectangular block is the transpose of its matrix:
	\[
	[M(a,b)^T\mid\rho_b^T]
	=\begin{pmatrix}M(a,b)\\\rho_b\end{pmatrix}^{T}.
	\]
	When $a=0$, $b\ne0$, the map $T_{0,b}(x)=bx^2$ is invertible; when $a=1$, $b=0$, the map $T_{1,0}(x)=x$ is invertible. In the remaining case $a=1$, $b\ne0$, we have
	\[
	\ker T_{1,b}=\{0,b^{-1}\},\qquad
	r_b(b^{-1})=\Tr_{\mathbb{F}_{2^m}/\mathbb{F}_{2}}(\lambda b b^{-1})=\Tr_{\mathbb{F}_{2^m}/\mathbb{F}_{2}}(\lambda)=1.
	\]
	Thus $\ker T_{1,b}\cap\ker r_b=\{0\}$. It follows that every nonzero codeword with zero bottom block  has rank  $m$.
	Consequently,  $d_R(\mathcal C_{m,t})=m$.
	
	\emph{Attainment of the dimension bound.}
	For the Ferrers diagram $\mathcal F_{m,t}$
 with rank-metric distance $m$, one can check that $\nu_{m-1}=m+t+1$, $\nu_0=2m$ and $\nu_{j}\geq m+t+1$
 for $1\leq j\leq m-2$. Since $m>t$,  we have  $\nu_{min}(\mathcal F_{m,t},m)=m+t+1$.
  Consequently, $\mathcal C_{m,t}$ is an optimal $[\mathcal F_{m,t},m+t+1,m]_2$ Ferrers diagram rank-metric code.
\end{proof}

\section{Conclusions and discussions}\label{sec:conclusion}

In this paper, we have constructed an optimal \([[3,3,3,5],6,3]_q\) Ferrers diagram rank-metric code for every prime power \(q\), which answers an open problem posed by Etzion et al.~\cite{Etzion2016}. 
We have also extended this result in two directions. First, we have constructed optimal Ferrers diagram rank-metric codes for an infinite family of Ferrers diagrams, obtained by a block combination construction based on the \([[3,3,3,5],6,3]_q\) seed. Second, independently of the preceding approach, we have directly constructed optimal binary Ferrers diagram rank-metric codes for another infinite family of diagrams, using the puncturing-inclusion construction. The first family extends the \([3,3,3,5]\) seed, while the second gives an independent binary construction.

It was proved in \cite{Prajapati2026} that the Ferrers diagram \([4,4,4,4,6]\) does not admit an optimal linear rank-metric code over \(\mathbb F_2\) at rank distance \(3\). This prevents an unconditional extension to all finite fields for the entire family \([\underbrace{n-1,\ldots,n-1}_{n-1},n+1]\); existence for other fields and parameters must be studied separately. Nevertheless,  our method may be a promising approach to determining which diagrams of this type admit optimal Ferrers diagram rank-metric codes, and due to their crucial role in the multilevel construction of constant dimension codes, this remains an interesting and worthwhile problem.

We end with two open problems.

\begin{problem}
	Can the construction of the optimal binary Ferrers diagram rank-metric codes, obtained directly by the puncturing-inclusion construction, be extended to general finite fields \(\mathbb{F}_q\)? More generally, for which prime powers \(q\) does an optimal code for the corresponding infinite family of diagrams exist?
\end{problem}

\begin{problem}
	How can the construction for the diagram \([3,3,3,5]\) be extended to other irreducible Ferrers diagrams? In particular, which irreducible diagrams admit optimal Ferrers diagram rank-metric codes, and can the Delsarte dual and puncturing-inclusion techniques developed here be adapted to them?
\end{problem}

\medskip
\noindent{\bf Acknowledgements} The authors  would like to thank 
 Wei  Hu  for  useful discussions. Dengming Xu  is supported by the Natural Science Foundation of Tianjin (Grant No.23JCQNJC00050).

\end{document}